\pgfplotsset{width=10cm, compat=1.9, legend style={font=\footnotesize}}
\tikzset{%
dot/.style={circle, fill=black, minimum size=4pt, inner sep=0pt, outer sep=-1pt},
hdot/.style={circle, fill=white, minimum size=4pt, inner sep=0pt, outer sep=-1pt},
}
\newlength\fheight
\newlength\fwidth
\tikzstyle{process}  = [rectangle, rounded corners, minimum width=2cm, minimum height=1cm,text centered, text width=2cm, draw=black, fill=red!20]
\tikzstyle{decision} = [rectangle, minimum width=2cm, minimum height=1cm, text centered, text width=2.5cm, draw=black, fill=orange!20]
\tikzstyle{scenario} = [rectangle, rounded corners, minimum width=2cm, minimum height=0.5cm, text centered, text width=3cm, draw=black, fill=pink!20]
\tikzstyle{arrow}= [thick, ->, >=stealth]
\tikzstyle{system} = [rectangle, rounded corners, minimum width=4cm, minimum height=4cm, text centered, text width=4cm, draw=black, fill=red!40,fill opacity = 0.2]
\tikzstyle{controller} = [rectangle,rounded corners,minimum width=1cm,minimum height = 1cm,text centered,text width=1cm,draw=black,fill=green!50,fill opacity = 0.2]
\tikzstyle{detector} = [rectangle,rounded corners,minimum width=1cm,minimum height = 1cm,text centered,text width=1cm,draw=black,fill=yellow!50,fill opacity = 0.2]
\newtheorem{theorem}{Theorem}
\newtheorem{lemma}{Lemma}
\newtheorem{corollary}{Corollary}
\newtheorem{assumption}{Assumption}
\newtheorem{remark}{Remark}
\newcommand{\Ncal}{{\mathcal{N}}}
\newcommand{\Gcal}{{\mathcal{G}}}
\newcommand{\Vcal}{{\mathcal{V}}}
\newcommand{\Ecal}{{\mathcal{E}}}
\newcommand{\Wcal}{{\mathcal{W}}}
\newcommand{\Acal}{{\mathcal{A}}}
\newcommand{\Lcal}{{\mathcal{L}}}
\newcommand{\Dcal}{{\mathcal{D}}}
\newcommand{\Bcal}{{\mathcal{B}}}
\newcommand{\Scal}{{\mathcal{S}}}
\newcommand{\Ccal}{{\mathcal{C}}}
\newcommand{\RR}{{\mathbb{R}}}
\newcommand{\CC}{{\mathbb{C}}}
\newcommand{\KK}{{\mathbb{K}}}
\newcommand{\I}{{\mathbf{I}}}
\newcommand{\1}{{\mathbf{1}}}
\newcommand{\0}{{\mathbf{0}}}
\newcommand{\commentout}[1]{}
\title{
    On Consensusability of Linear Interconnected Multi-Agent Systems and Simultaneous Stabilization
}
\author{Mustafa S. Turan, Liang Xu, Giancarlo Ferrari-Trecate
    \thanks{Authors are with the Institute of Mechanical Engineering (IGM),
    	EPFL, Switzerland. Email: {\tt\small  \{mustafa.turan, liang.xu, giancarlo.ferraritrecate\}@epfl.ch}}
    \thanks{This work has been supported by the Swiss National Science Foundation under the COFLEX project (grant number 200021\_169906) and the National Centre of Competence in Research (NCCR) in Dependable and Ubiquitous Automation.}
}
\begin{document}
\maketitle

\begin{abstract}
  Consensusability of multi-agent systems (MASs) certifies the existence of a distributed controller capable of driving the states of each subsystem to a consensus value.
  We study the consensusability of linear interconnected MASs (LIMASs) where, as in several real-world applications, subsystems are physically coupled.
  We show that consensusability is related to the simultaneous stabilizability of multiple LTI systems, and present a novel sufficient condition in form of a linear program for verifying this property.
  We also derive several necessary and sufficient consensusability conditions for LIMASs in terms of parameters of the subsystem matrices and the eigenvalues of the physical and communication graph Laplacians.
  The results show that weak physical couplings among subsystems and densely-connected physical and communication graphs are favorable for consensusability.
  Finally, we validate our results through simulations of networks of supercapacitors and DC microgrids.
\end{abstract}\vspace{-0.3cm}

\section{Introduction}\label{sec:Intro}

Consensus theory for multi-agent systems (MASs) finds applications in several fields, ranging from cooperative robotics to power systems~\cite{RezaOlfatiSaber2007ProcIEEE}.
Consensusability of MASs refers to the existence of a distributed protocol, using only locally available information, from a predefined class such that the MAS can achieve consensus. As such, it is a binary property.
For linear MASs, consensusability has been extensively studied in the past decades.
For example, the authors in~\cite{MaCuiqin2010TAC} show that a continuous-time linear MAS can reach consensus if the dynamics of each agent is controllable and the communication topology is connected.
The work~\cite{you2011} shows that, for discrete-time linear MASs, consensusability is guaranteed if the unstable eigenvalues of the agent state matrix verifies certain conditions related to the eigenvalues of the communication graph Laplacian.
For consensus of MASs with switching topologies, \cite{WangXingping2018TAC} shows that if the Lyapunov exponent of agent dynamics is less than a suitably defined synchronizability exponent of the switching topology, the MAS can achieve consensus.
The authors of~\cite{XuLiang2018Automatica} and \cite{XuLiang2020TAC} study consensus in presence of communication channels affected by fading and packet dropouts, and show that the consensusability condition is closely related to
the statistics of the noisy communication channel, the eigenvalues of the communication graph Laplacian, and the instability degree of the agent dynamics. 
The works~\cite{Zhengjianying2019TAC} and \cite{XuJuanjuan2019TAC} consider MASs affected by communication delays and provide upper bounds on delays for guaranteeing consensusability.

The above research works assume that agents in MASs are coupled through a cyber layer only (i.e., a communication network) and not through physical interactions. This is not the case in several real-world applications such as power networks and microgrids, where nodes are physically interconnected and, consequently, their dynamics are coupled~\cite{guerrero2010}.
Other examples of interconnected MASs include flow networks, production systems, and traffic networks~\cite{chen1991discrete,papageorgiou1990dynamic}.
This raises the issue of studying how physical couplings affect consensusability.

This problem has been considered in~\cite{wang2015hierarchical,cheng2015leader,chen2015consensus}.
The authors of~\cite{wang2015hierarchical} focus on consensus of multiple linear systems with uncertain subsystem interconnections for tracking a reference.
They propose a distributed adaptive controller based on hierarchical decomposition and prove that the consensus error converges to a compact set if physical interconnections are sufficiently weak. 
Leader-follower tracking problems for linear interconnected MASs (LIMASs) are considered in~\cite{cheng2015leader}. 
Interactions between systems are treated as dynamic uncertainties and are described in terms of integral quadratic constraints.
Two methods are proposed to design consensus-like tracking protocols. 
Sufficient conditions to guarantee that the system tracks the leader are obtained in terms of the feasibility of linear matrix inequalities (LMIs).
The authors of~\cite{chen2015consensus} investigate the state consensus problem of a general LIMAS.
They propose a linear consensus protocol and derive a sufficient and necessary criterion to guarantee convergence to consensus, which is expressed in terms of the Hurwitz stability of a matrix constructed from the parameters of the agents and the protocols.
However, the aforementioned works lack a quantitative analysis of the relations of this property with physical interconnection and communication graphs.

In this paper, we study the consensusability of LIMASs equipped with linear distributed controllers, while providing analytic characterizations on how the physical and cyber couplings impact on it. 
In particular, we consider \textit{homogeneous} LIMASs, whose subsystems have identical dynamics and control gains (see Section~\ref{sec:ProblemFormulation}). Moreover, we direct our attention to LIMASs consisting of single-input subsystems interconnected via physical coupling with a Laplacian structure. 
The contributions of this paper to the existing literature are given as follows. 
First, we show that the consensusability problem for LIMASs is related to a simultaneous stabilizability problem.
Second, we present a linear-programming-based sufficient condition for verifying the simultaneous stabilizability of multiple LTI systems, which provides a simple alternative to existing methods relying on convex programming. 
Third, we present several sufficient conditions, as well as a necessary condition, for the consensusability of LIMASs.
Our results illustrate how consensusability is influenced by physical and communication coupling among subsystems.

A preliminary version of this work has been presented in~\cite{turan2020consensusability}. With respect to it, this paper has several differences.
In addition to providing proofs for Theorems~\ref{prop:consensusability_scalar}-\ref{prop:consensusability_necessary}, new tests for the simultaneous stabilizability of multiple LTI systems are developed. Based on these results, this paper also
provides new sufficient conditions for consensusability. Finally, simulation results have been completely revisited and include applications to microgrids. 

The paper is organized as follows.
In Section~\ref{sec:ProblemFormulation}, we introduce LIMASs and provide the problem formulation.
In Section~\ref{sec:SimultaneousStabilization}, we discuss the simultaneous stabilization problem of multiple LTI systems.
Consensusability analysis of LIMASs is presented in Section~\ref{sec:Consensusability}.
Simulation results are given in Section~\ref{sec:SimulationResults} and concluding remarks are presented in Section~\ref{sec:ConclusionAndFuturePerspectives}.\vspace{-0.2cm}

\subsection{Notation}
The operator $|\cdot|$ applied to a set determines its cardinality, while used with matrices or vectors it defines their component-wise absolute value.
For a symmetric matrix $A\in\RR^{n\times n}$, $A\succ0$ ($A\succeq 0$) denotes that it is positive (semi-) definite. When used with vectors, inequalities are meant component-wise. 
$\rho(.)$ represents the spectral radius of its argument.
The symbol $\Bcal_1$ denotes the unit disk in the complex plane. A polynomial is called Schur if all of its roots are in $\Bcal_1$.
The symbol $\otimes$ represents the Kronecker product. With a slight abuse of notation, it is similarly defined for subspaces, i.e., for two subspaces $\mathbb{X}$ and $\mathbb{Y}$, $\mathbb{X}\otimes\mathbb{Y} = \{x\otimes y| x\in \mathbb{X}, y\in \mathbb{Y}\}$.
$\1_n \in \RR^n$ and $\0_n \in \RR^n$ represent column vectors with all elements equal to one and zero, respectively. $\0_{N\times n}\in \RR^{N\times n}$ and $\I_n\in \RR^{n\times n}$ denote a matrix that consists of all zero elements and an identity matrix, respectively. $\mathbb{H}^1$ denotes the $(N-1)$-dimensional subspace of $\RR^N$ comprising vectors with zero average, i.e., $\mathbb{H}^1 = \{v\in\RR^N| \1_N^\top v = 0\}$. $\mathbb{H}^1_\perp$ is the $1$-dimensional subspace orthogonal to $\mathbb{H}^1$, composed of $N$-dimensional vectors of identical elements, i.e., $\mathbb{H}^1_\perp \perp \mathbb{H}^1$ and $\mathbb{H}^1_\perp = \{a\1_N| a\in \RR\}$. Then, $\mathbb{H}^1\oplus \mathbb{H}^1_\perp = \RR^N$, where $\oplus$ denotes the direct subspace sum. The set $\{-1,1\}^{m\times n}$ with cardinality $2^{mn}$ consists of all the different $m\times n$ matrices comprising elements $-1$ and $1$. $V=\text{col}\{v_1,\dots,v_N\}$ denotes a matrix composed of column concatenation of vectors $v_i,$ $i=\{1,\dots,N\}$. \vspace{-0.2cm}

\subsection{Preliminaries on algebraic graph theory}

An undirected weighted graph of $N$ nodes is defined as $\Gcal = \left(\Vcal, \Wcal, \Ecal\right)$, where $\Vcal = \{1,2,\dots,N\}$ is the set of nodes, $\Ecal \subseteq \Vcal \times \Vcal$ is the set of edges, and $\Wcal$ is a diagonal matrix with the weight of the corresponding edge in each diagonal entry.
    The set of neighbors of node $i$ is defined as $\Ncal_i = \{j|(i,j)\in\Ecal\}$.
    A path $p_{ij}$ is an ordered sequence of consecutive edges such that every edge in the sequence is in $\Ecal$, the first edge starts from node $i$, and the last edge ends in node $j$.
    The adjacency matrix of $\Gcal$ is defined as $\Acal = \left[a_{ij}\right]_{N\times N}$, where $a_{ij} = 0$ if $\left(i,j\right)\notin \Ecal$ or $i=j$ and $a_{ij} >0$ is the positive weight of edge $\left(i,j\right) \in \Ecal$.
    The degree of a node $i$ is defined as $d_i = \sum_{j\in\Ncal_i}a_{ij}$ along with the degree matrix $\Dcal = \mathrm{diag}\left(d_1,d_2,\dots,d_N\right)$.
    The Laplacian matrix of $\Gcal$ is given by $\Lcal = \Dcal - \Acal$.
    An undirected graph is connected if there exists a path from every node $i$ to every other node. \vspace{-0.15cm}

\section{Problem Formulation}\label{sec:ProblemFormulation}

    The interaction among agents in a LIMAS is described by two undirected graphs with a common set of nodes $\Vcal = \{1,\dots,N\}$ associated with subsystems: a physical graph $\Gcal_p = \left(\Vcal,\Wcal_p,\Ecal_p\right)$
    representing the physical interconnection among subsystems and a cyber graph $\Gcal_c = \left(\Vcal,\Wcal_c,\Ecal_c\right)$ representing the communication network. 
    Figure~\ref{fig:LIMAS_struct} shows an example LIMAS.
In the scope of this work, we assume that $\Gcal_c$ is connected; however, the physical graph $\mathcal{G}_p$ is allowed to be disconnected. For a subsystem $i$, the set of its neighbors $\Gcal_p$ and $\Gcal_c$ are denoted by $\Ncal_i^p$ and $\Ncal_i^c$, respectively.

    We consider a homogeneous LIMAS with subsystem dynamics described by
\begin{equation}\label{eq:Si_dyn}
x_i^+ = Ax_i + A_p\sum_{j\in\Ncal_i^p}a_{ij}(x_j-x_i) + Bu_i, \quad i=1, \ldots, N,
  \end{equation}
where $x_i \in \RR^n$ are the states, $u_i \in \RR$ are the scalar control inputs, $a_{ij}=a_{ji} \in \RR_{>0}$ are the symmetric physical interconnection weights, and $A_p \in \RR^{n\times n}$ is a matrix determining the physical coupling.
We are interested in distributed controllers given by
\begin{equation}\label{eq:Si_contr}
u_i = K\sum_{j\in\Ncal_i^c}b_{ij}\left(x_i-x_j\right), \quad i=1, \ldots, N,
\end{equation}
where $K\in\RR^{1\times n}$ is the control gain common to all subsystems and $b_{ij}=b_{ji}\in \RR_{>0}$ denote the symmetric communication weights in the cyber graph $\Gcal_c$. Note that the control gain $K$ is a design parameter while $b_{ij}$, $\Lcal_c$ are assumed to be given.

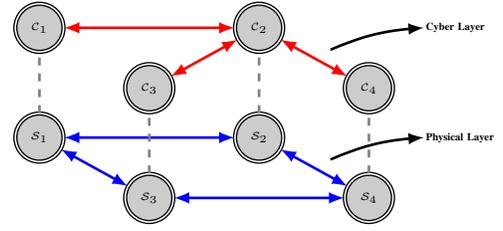
\begin{figure}\hspace{1.5cm}
	\centering
	{\tiny
		\ctikzset{bipoles/length=0.8cm}
		\tikzstyle{every node}=[minimum size=.8cm, inner sep=.8, line width=0.5pt, scale=0.8]
		\begin{circuitikz}[american currents, scale=0.73, line width=1pt]
			\draw (1,1-2.0) node(s1) [circle, draw=black, double,
			fill=black!20] {$\Scal_1$};
			\draw (7,-.1-2.0) node(s4)  [circle, draw=black, double,fill=black!20] {$\Scal_4$};
			\draw (3,-.1-2.0) node(s3)  [circle, draw=black, double, fill=black!20] {$\Scal_3$};
			\draw (5,1-2.0) node(s2)  [circle, draw=black, double,  fill=black!20] {$\Scal_2$};

			\draw[latex-latex, blue] (s1) to (s2);
			\draw[latex-latex, blue] (s2)  to (s4);
			\draw[latex-latex, blue] (s4) to (s3);
			\draw[latex-latex, blue] (s1) to (s3);

			\draw (1,1) node(c1) [circle, draw=black, double,
			fill=black!20] {$\Ccal_1$};
			\draw (7,-.1) node(c4)  [circle, draw=black, double,fill=black!20] {$\Ccal_4$};
			\draw (3,-.1) node(c3)  [circle, draw=black, double, fill=black!20] {$\Ccal_3$};
			\draw (5,1) node(c2)  [circle, draw=black, double,  fill=black!20] {$\Ccal_2$};
			
			\draw[dashed, gray] (s1) to(c1);
			\draw[dashed, gray] (s2) to (c2);
			\draw[dashed, gray] (s3) to (c3);
			\draw[dashed, gray] (s4) to (c4);
			
			\draw[latex-, black] (8,1-2.0) to [bend right=10] (6.3,0.6-2.0) ;
			\draw[latex-, black] (8,1) to [bend right=10] (6.3,0.6) ;
			
			\node[anchor=west,text width=1.5cm] (note1) at (8,1-2.0)
			{\textbf{Physical Layer}};
			
			\node[anchor=west,text width=1.5cm] (note2) at (8,1)
			{\textbf{Cyber Layer}};
			
			\draw[latex-latex, red] (c3) to (c2);
			\draw[latex-latex, red] (c2) to (c1);
			\draw[latex-latex, red] (c4) to (c2);
	\end{circuitikz}}
	\caption{Illustration of a LIMAS. Blue arrows represent physical couplings among subsystems $\Scal_i$, dashed gray lines indicate connections between each subsystem and its corresponding controller, and red arrows represent communication channels among controllers $\Ccal_i$.
	}\vspace{-0.45cm}
	\label{fig:LIMAS_struct}
\end{figure}

    \begin{remark}
      Typical examples of LIMASs that can be modeled by~\eqref{eq:Si_dyn}, \eqref{eq:Si_contr} are DC microgrids, where distributed generation units are physically coupled through electric lines and communication networks are used for obtaining global coordinated behaviors, such as current sharing and voltage balancing~\cite{tucci2018}. A detailed example is provided in Section~\ref{sec:SimulationResults}. 
      We note that, in the field of consensus for MASs, it is common to assume identical subsystem dynamics as in~\eqref{eq:Si_dyn}. Such an assumption is reasonable in several application scenarios, where the hardware of subsystems is standardized for efficient serial production. For instance, individual converters in a DC microgrid can be chosen as identical, resulting in a homogeneous LIMAS. Moreover, using a single static feedback gain for all subsystems alleviates the burden of designing different consensus gains for each subsystem, which is critical in deploying control architectures for large-scale MASs.
  \end{remark}

By combining \eqref{eq:Si_dyn} and \eqref{eq:Si_contr}, the overall dynamics of the LIMAS can be compactly written as:
\begin{equation}\label{eq:LIMAS_dyn}
x^+ = \left(\I_N \otimes A - \Lcal_p\otimes A_p + \Lcal_c\otimes BK\right)x,
\end{equation}
where $x = \left[x_1^{\top},\dots,x_N^{\top}\right]^{\top} \in \RR^{Nn}$ is the cumulative state whereas $\Lcal_p$ and $\Lcal_c$ are the Laplacian matrices of $\Gcal_p$ and $\Gcal_c$, respectively. Note that the structure of physical interconnections in~\eqref{eq:Si_dyn} gives the term $\Lcal_p\otimes A_p$ in~\eqref{eq:LIMAS_dyn} involving the Laplacian matrix $\Lcal_p$. For this reason, the coupling is termed \textit{Laplacian}. Physical interconnections hamper the use of existing methods for analyzing consensusability of MASs. We define the consensusability problem in LIMASs as follows.

\noindent\textbf{Problem 1}:
\textit{Given the LIMAS \eqref{eq:LIMAS_dyn}, provide conditions for the existence of a static feedback gain $K \in \RR^{1\times n}$ such that the states of all subsystems converge to a global consensus vector, i.e., 
	\begin{equation}\label{eq:consensusdefinition}
	\lim\limits_{t\rightarrow\infty}|x_i(t)-\bar{v}| = \0_n, \quad \forall i \in \Vcal,
	\end{equation}
	for some $\bar{v} \in \RR^{n}$.}

To study this problem, we define the average state \vspace{-0.1cm}
\begin{equation*}
	\bar{x} \triangleq \frac{1}{N} \sum_{i=1}^Nx_i = \frac{1}{N}\left(\1_N^{\top}\otimes\I_n\right)x,
\end{equation*}
 and the deviation from $\bar{x}$
\begin{equation}\label{eq:delta_def}
\begin{split}
\delta &\triangleq \left[x_1^{\top}-\bar{x}^{\top},\dots,x_N^{\top}-\bar{x}^{\top}\right]^{\top} = x-\1_N\otimes\bar{x} \\
&= \left(\left(\I_N-\frac{1}{N}\1_N\1_N^{\top}\right)\otimes\I_n\right)x
\end{split}.
\end{equation}

The dynamics of $\delta$ can be derived from \eqref{eq:LIMAS_dyn} and \eqref{eq:delta_def} as\vspace{-0.15cm}

\begin{equation}\label{eq:delta_dyn}
\delta^+
= \left(\I_N \otimes A - \Lcal_p\otimes A_p + \Lcal_c\otimes BK\right)\delta.
\end{equation}
The consensusability of the LIMAS is, therefore, equivalent to the stabilizability of \eqref{eq:delta_dyn}.

Consider the following properties:
	\begin{enumerate}
		\item the columns of $\1_N\otimes\I_n/\sqrt{N}$ span the subspace $\mathbb{H}^1_\perp \otimes \RR^n$
		\item $(\mathbb{H}^1\otimes \RR^n) \perp (\mathbb{H}^1_\perp\otimes \RR^n)$
		\item $(\mathbb{H}^1\otimes \RR^n) \oplus (\mathbb{H}^1_\perp\otimes \RR^n) = \RR^{Nn}$
	\end{enumerate}
	They imply that there exists a unitary matrix $\Phi = \left[\1_N/\sqrt{N},\phi_2,\dots,\phi_N\right]$ where $\{\phi_2,\dots,\phi_N\}$ form a basis for $\mathbb{H}^1$, such that, by defining $\tilde{\delta} = \left[\tilde{\delta}_1^{\top},\dots,\tilde{\delta}_N^{\top}\right]^{\top} \triangleq \left(\Phi^{\top}\otimes\I_n\right) \delta$, we have \vspace{-0.2cm}
	\begin{equation}\label{eq:deltatilde_dyn_vec}
	\begin{split}
	\tilde{\delta}^+ = \left(\I_N \otimes A - \begin{bmatrix}
	\0_{n\times n} & \0_{n\times (N-1)n} \\
	\0_{(N-1)n\times n} & \tilde{\Lcal}_p
	\end{bmatrix}\otimes A_p \right.& \\
	\left. + \begin{bmatrix}
	\0_{n\times n} & \0_{n\times (N-1)n} \\
	\0_{(N-1)n\times n} & \tilde{\Lcal}_c
	\end{bmatrix}\otimes BK\right)\tilde{\delta}.&
	\end{split}
	\end{equation}
	$\tilde{\Lcal}_p$ is a positive semidefinite matrix with eigenvalues $\{\lambda_{p,2},\dots,\lambda_{p,N}\}$. Similarly, the positive definite matrix $\tilde{\Lcal}_c$ has eigenvalues $\{\lambda_{c,2},\dots,\lambda_{c,N}\}$.
	The transformation in~\eqref{eq:deltatilde_dyn_vec} decomposes the dynamics of $\delta$ into two noninteracting parts: $\tilde{\delta}_1$ and $[\tilde{\delta}_2^\top,\dots,\tilde{\delta}_N^\top]^\top$ representing the evolution of $\delta$ in $\mathbb{H}^1_\perp \otimes \RR^n$ and in $\mathbb{H}^1 \otimes \RR^n$, respectively. 
	Furthermore, we can show that $\tilde{\delta}_1=\0_n$ by definition. Therefore, $\delta(t)$ asymptotically converges to zero if and only if the dynamics of $[\tilde{\delta}_2^\top,\dots,\tilde{\delta}_N^\top]^\top$ is stable, i.e., the matrix
	        \begin{align}
	          \label{eq:reduced-closed-loop-matrix}
	        \I_{N-1}\otimes A - \tilde{\Lcal}_p\otimes A_p + \tilde{\Lcal}_c\otimes BK 
	        \end{align}
	      is Schur stable.
As will be shown in Section~\ref{sec:Consensusability}, the stability of~\eqref{eq:reduced-closed-loop-matrix} is related to the problem of simultaneous stabilization of a group of low-dimensional systems.
Therefore, in the next section, we first make a detour and provide novel conditions for simultaneous stabilization that, besides being useful for analyzing consensusability, also have an independent value.\vspace{-0.3cm}

\section{A Simultaneous Stabilization Test Based on Linear Programming}\label{sec:SimultaneousStabilization}

The simultaneous stabilization problem has attracted the interest of several researchers, especially in the area of robust control~\cite{cao1998static}.
Despite a number of results for linear systems~\cite{kabamba1991,vidyasagar1982,blondel1994}, prior work has shown that providing algebraic conditions for simultaneous stabilization of more than three systems is a difficult problem~\cite{blondel1993}.
Extensive effort, therefore, has been put in developing numerical tests~\cite{cao1998static}, mainly relying on convex and non-convex programming (CP and NCP, respectively)~\cite{henrion1999rank, boyd1993control, howitt1991simultaneous,ackermann1980parameter}.
In this section, we study the simultaneous stabilization of multiple LTI systems via linear static feedback.
Existing criteria include sufficient and necessary conditions in terms of NCP~\cite{howitt1991simultaneous,ackermann1980parameter} and LMI-based sufficient conditions~\cite{boyd1993control}.
In this section, we present a sufficient condition in terms of linear programming (LP), which require less computational resources than their CP and NCP counterparts~\cite{boyd2004convex}. 

We consider $M$ single-input LTI systems described by
\begin{equation}\label{eq:sim_stabilization_dyn}
        \check{x}_l^+ = A_l\check{x}_l+B_l\check{u}_l, \quad l =1,\dots,M,
\end{equation}
where $\check{x}_l \in \RR^m$ are system states and $\check{u}_l\in\RR$ are control inputs. In the sequel, we assume that all pairs $(A_l, B_l)$ are controllable.
A first result is stated in Lemma~\ref{lem:SimultaneousStabilization}, which relies on a conservative parameterization of stabilizing controllers based on Ackermann's formula~\cite{shinners1998modern}. Note that controller parameterization using Ackermann's formula has been exploited for simultaneous stabilization also in~\cite{ackermann1980parameter,howitt1991simultaneous}; however, these works provide necessary and sufficient conditions in terms of NCPs, which can not be efficiently checked for very large $m$ and $M$. Before presenting the lemma, we introduce the following definitions. For $l=1, \ldots, M$, $\mathcal{M}_l=[B_l,A_lB_l,\dots,A_l^{m-1}B_l]$ is the controllability matrix of the pair $(A_l,B_l)$, and the last row of its inverse is denoted as $r_{l,m}=[0\enskip\dots\enskip 0 \enskip 1]\mathcal{M}_l^{-1}$. Accordingly, we define
        \begin{equation}\label{eq:V_l}
          \begin{split}
          V_l &= \text{col}\{-r_{l,m}\I_m, -r_{l,m}A_l,\dots,-r_{l,m}A_l^{m-1}\} \\
        v_l& = -r_{l,m}A_l^m
        \end{split}.
      \end{equation}
     We also define $\Gamma=\text{col}\{\Gamma_{r,1},\Gamma_{r,2},\dots,\Gamma_{r,2^m}\}\in\RR^{2^m\times m}$, 
        where $\Gamma_{r,j}$ are the different vectors in the set $\{-1,1\}^{1\times m}$.

\begin{lemma}\label{lem:SimultaneousStabilization}
For the group of controllable systems $(A_l,B_l)$, $l=1, \ldots,M$, there exists $K$ such that $A_l+B_lK$ is Schur stable for every $l$ if there exists a vector $c = [c_1^\top \enskip \dots \enskip c_M^\top]^\top \in \RR^{Mm}$ satisfying
        \begin{equation}\label{eq:LP_suff_prop}
                Vc= v, \quad Hc\leq h,
        \end{equation}
where \vspace{-0.2cm}
                \begin{equation}\label{eq:V_H_definitions}
        \begin{split}
        V = &\begin{bmatrix}
        V_1^\top & -V_2^\top & \0_{m\times m} & \dots & \0_{m\times m} \\
        \0_{m\times m} & V_2^\top & -V_3^\top &  \ddots & \0_{m\times m} \\
        \vdots  & \ddots  & \ddots & \ddots & \vdots \\
        \0_{m\times m}  & \dots & \0_{m\times m}  & V_{M-1}^\top &-V_{M}^\top
        \end{bmatrix}\\
        &\quad \enskip   v = [v_2-v_1,\dots,	v_M-v_{M-1}]^\top \\
         & \quad \quad  H = \I_{M}\otimes\Gamma\quad \quad  h = \1_{M2^m}
        \end{split}.
        \end{equation}
 Moreover, if such a $c$ exists, the matrices $c_l^\top V_l+v_l$ are identical and any of them provides a simultaneously stabilizing gain $K$.
\end{lemma}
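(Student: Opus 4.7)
The plan is to leverage Ackermann's formula to parametrize the static feedback gains that place the closed-loop poles for each $(A_l,B_l)$ at a prescribed location, and then to translate the two requirements---equality of the gains across all $l$, and Schur stability of every closed-loop matrix---into the linear equality and inequalities of \eqref{eq:LP_suff_prop}.

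First, since each pair $(A_l,B_l)$ is single-input controllable, Ackermann's formula yields, for every monic polynomial $p_l(\lambda)=\lambda^m+c_{l,m-1}\lambda^{m-1}+\cdots+c_{l,0}$, the unique feedback $K_l=-r_{l,m}\,p_l(A_l)$ such that $A_l+B_lK_l$ has characteristic polynomial $p_l$. Substituting the definitions in \eqref{eq:V_l} and collecting the coefficients into $c_l=[c_{l,0},\ldots,c_{l,m-1}]^\top$ gives the compact form $K_l=c_l^\top V_l+v_l$. Requiring $K_1=\cdots=K_M$ then becomes the $M-1$ consecutive equalities $c_l^\top V_l+v_l=c_{l+1}^\top V_{l+1}+v_{l+1}$, which, stacked into a single block system, are exactly $Vc=v$ as defined in \eqref{eq:V_H_definitions}; the final sentence of the lemma is immediate once this identity holds.

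Second, I would enforce Schur stability of each $p_l$ through the conservative $\ell_1$ bound $\|c_l\|_1\le 1$, justified via a Rouché comparison with $\lambda^m$: on any circle $|\lambda|=r<1$ sufficiently close to $1$, one has $|p_l(\lambda)-\lambda^m|\le\sum_k|c_{l,k}|r^k<r^m=|\lambda^m|$ whenever $\|c_l\|_1<1$, so $p_l$ inherits the $m$ zeros of $\lambda^m$ inside the open unit disk $\Bcal_1$ and is therefore Schur. The polytope $\{c_l:\|c_l\|_1\le 1\}$ admits the facet description $\sigma^\top c_l\le 1$ ranging over sign vectors $\sigma\in\{-1,1\}^m$, that is, $\Gamma c_l\le \1_{2^m}$, and stacking these inequalities over $l$ produces precisely $Hc\le h$ with $H=\I_M\otimes\Gamma$.

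The delicate point, where I expect the argument to require the most care, is the boundary of the $\ell_1$ constraint: the LP in \eqref{eq:LP_suff_prop} uses a non-strict $\le$, while the Rouché bound above needs $\|c_l\|_1<1$ strictly. I would address this either by noting that generic problem data make any LP solution strictly feasible (so the strict bound holds automatically), or by invoking continuity of polynomial roots in the coefficients to perturb a boundary-feasible $c$ along $\ker V$ into the strict interior of $\{Hc\le h\}$ while preserving $Vc=v$. The resulting $c$ then yields the simultaneously stabilizing common gain $K=c_l^\top V_l+v_l$, and the three ingredients---Ackermann parametrization, common-gain constraint as $Vc=v$, and $\ell_1$-sufficient-Schur packaged as $Hc\le h$---together deliver the lemma.
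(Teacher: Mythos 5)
Your proposal is correct and follows essentially the same route as the paper: Ackermann's formula gives the affine parametrization $K_l=c_l^\top V_l+v_l$, the common-gain requirement becomes the stacked equalities $Vc=v$, and Schur stability is enforced through the sufficient $\ell_1$ coefficient bound written as $\Gamma c_l\leq\1_{2^m}$ (the paper cites Jury's stability criterion where you supply a self-contained Rouch\'e argument). The boundary subtlety you single out is real --- the paper itself glosses over it by declaring the strict condition $\sum_j|c_{l,j}|<1$ ``equivalent'' to the non-strict constraint $\Gamma c_l\leq \1_{2^m}$ --- but of your two patches, the perturbation along $\ker(V)$ is not guaranteed to shrink every $\|c_l\|_1$ simultaneously (the $l$-th block of the perturbation is $(V_l^{-1})^\top w$ for a single common $w$, and these need not all point inward), so the clean fix is simply to impose the strict inequality, e.g.\ $h=(1-\epsilon)\1_{M2^m}$.
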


\begin{proof}
  We first describe the controller parametrization for the pair $(A_l,B_l)$ and later show how to design a simultaneously stabilizing controller for all pairs $(A_l, B_l)$ based on this parameterization.

In view of Ackermann's formula, we know that, for the controllable pair $(A_l,B_l)$ and a vector collecting desired closed-loop eigenvalues $\lambda^D_l\triangleq [\lambda^D_{l,1},\dots,\lambda^D_{l,m}]^\top\in \CC^m$, the state-feedback controller $K_l= -r_{l,m}p^D_{l}(A_l)\in \RR^{1\times m}$ assigns the eigenvalues of $A_l+B_lK_l$ to the elements of $\lambda^D_l$, 
where the desired characteristic polynomial $p^D_{l}(A_l)$ is written as
        \begin{equation}\label{eq:char_polyn}
        \begin{split}
        p^D_{l}(A_l) &= \left(A_l-\lambda^D_{l,1}\I_m\right)\dots\left(A_l-\lambda^D_{l,m}\I_m\right) \\
        &\triangleq A_l^m + c_{l,m-1}A_l^{m-1} + \dots + c_{l,1}A_l + c_{l,0}\I_m
        \end{split}.
      \end{equation}
Defining $c_l\triangleq [c_{l,0} \enskip \dots \enskip c_{l,m-1}]^\top$, one sees that $c_l = g(\lambda^D_l)$ 
is a polynomial function of order $m$. Therefore, a set $\KK_l$ of stabilizing controllers $K_l$ for $(A_l,B_l)$ can be parameterized by $\lambda_l^D$ as $\KK_l = \{K_l=-r_{l,m}p^D_{l}(A_l)|\lambda^D_l\in \Lambda^D\}$,
where 
\begin{equation*}
\begin{split}
\Lambda^D = \{\lambda^D\in\CC^m| \lambda^D_{j} \in \Bcal_1 \enskip \forall j\in\{1,\dots,m\} \enskip \text{and}& \\
\lambda^D_{j} \text{ are real or in conjugate pairs}& \}
\end{split}.
\end{equation*}

    Since the set $\Lambda^D$ has a complex geometry, the computation of the set $\KK_l$ is convoluted. To circumvent this problem, we leverage a classic result on Schur stable polynomials~\cite{jury1964theory}: 
    $\lambda^D_l\in\Lambda^D$ if \vspace{-0.2cm}
        \begin{equation}\label{eq:dominant_condition}
        \sum_{j=0}^{m-1}|c_{l,j}|<1,
        \end{equation}\vspace{-0.4cm}
        
       \noindent  which can equivalently be written as $\Gamma c_l\leq \1_{2^m}$. Further noting that the polynomial $p^D_l(A_l)$ is an affine function of $c_l$ and $K_l\in\KK_l$ is a linear function of $p_l^D(A_l)$, we can define a new set $\KK_l^s \subseteq \KK_l$ of stabilizing controllers for system $l$ in terms of $c_l$ as $\KK_l^s = \{K_l=c_l^\top V_l + v_l|\Gamma c_l\leq \1_{2^m}\}$,
        where $V_l$ and $v_l$ are defined in~\eqref{eq:V_l}.

        Based on the above results, we know that if
        $\bigcap_{l=1}^M\KK_l^s \neq \emptyset,$
 there exists $K\in\bigcap_{l=1}^M\KK_l^s$ simultaneously stabilizing all $(A_l,B_l)$ pairs.
The above condition is equivalent to the existence of vectors $c_l \in \RR^m$ for $l\in\{1,\dots,M\}$ such that $c_l^\top V_l+v_l=c_{l+1}^\top V_{l+1}+v_{l+1}, \enskip \forall l\in\{1,\dots,M-1\}$, which yields the feasibility condition given by \eqref{eq:LP_suff_prop}. The proof of the second part is straightforward as the feasibility of \eqref{eq:LP_suff_prop} means that stabilizing control gains $K_l=C_l^\top V_l + v_l$, $l=1,\dots,M$ are identical. Therefore, it suffices to pick one.
\end{proof}\vspace{-0.3cm}

\begin{remark}\label{rem:LP_conservativity}
	In Lemma~\ref{lem:SimultaneousStabilization}, the only source of conservativity is the use of the condition in~\eqref{eq:dominant_condition} for Schur stability of polynomials, which is only sufficient for $m>1$, and gets more and more conservative as the system order $m$ increases.
	Indeed,~\eqref{eq:dominant_condition} is the main novelty of the proposed method, compared with the necessary and sufficient conditions in~\cite{ackermann1980parameter,howitt1991simultaneous}. Therefore, for scalar systems, i.e., $m=1$, the results in Lemma~\ref{lem:SimultaneousStabilization} are necessary and sufficient.
The conservativity of this lemma, however, does not increase with the number of systems $M$. 
\end{remark}

We observe that the structure of the equality constraints in~\eqref{eq:LP_suff_prop} can be further exploited to simplify the redundancies in the proposed LP. Therefore, in the following, we show that the result in Lemma~\ref{lem:SimultaneousStabilization} can be equivalently cast into a simpler LP with a smaller number of decision variables.

\begin{theorem}\label{thm:SimultaneousStabilization}
For the group of controllable systems $(A_l,B_l)$, $l=1,\dots,M$, the following hold:
\begin{enumerate}
        \item \label{itm:Vl_inv} The matrices $V_l$ in~\eqref{eq:V_l} are invertible
        \item \label{itm:V_fullrank} The matrix $V$ in~\eqref{eq:V_H_definitions} has full row rank and its right inverse $V^\dagger$ is given by
        \begin{equation}\label{eq:Vdagger}
                V^\dagger = \begin{bmatrix}
                \left(V_1^\top\right)^{-1} & \left(V_1^\top\right)^{-1} & \dots & \left(V_1^\top\right)^{-1} \\
                \0_{m\times m} & \left(V_2^\top\right)^{-1} &  \dots & \left(V_2^\top\right)^{-1} \\
                \vdots & \ddots & \ddots & \vdots \\
                \0_{m\times m} & \dots & \0_{m\times m} & \left(V_{M-1}^\top\right)^{-1} \\
                \0_{m\times m} & \0_{m\times m} & \0_{m\times m} & \0_{m\times m}
                \end{bmatrix}.
        \end{equation}
        Moreover, the null space of $V$ is spanned by columns of the matrix
        \begin{equation}\label{eq:Psi}
                \Psi = \begin{bmatrix}
                V_1^{-1} &V_2^{-1} & \dots &V_M^{-1}
                \end{bmatrix}^\top.
        \end{equation}
        \item \label{itm:LP_reduced} Pairs $(A_l,B_l)$ are simultaneously stabilizable by a common gain $K$ if there exists a vector $w \in \RR^{m}$ such that
        \begin{equation}\label{eq:LP_suff_prop_red}
                H\Psi w\leq h-HV^\dagger v.
        \end{equation}
Furthermore, if such a $ w$ exists, the feedback gain
        \begin{equation}\label{eq:K_sim_stab}
                K=v_M+ w^\top
        \end{equation}
        stabilizes $(A_l,B_l)$ for every $l\in\{1,\dots,M\}$.
\end{enumerate}
\end{theorem}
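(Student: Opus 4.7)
The plan is to dispatch the three items in sequence, treating (1) as an Ackermann-style observability fact, (2) as a direct verification of the claimed right inverse and null-space basis via block multiplication, and (3) as a standard parameterization of the solution set of the underdetermined linear system $Vc=v$.

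For item (1), I would note that $V_l$ coincides, up to sign, with the observability matrix of the pair $(A_l, r_{l,m})$, since its rows are $-r_{l,m} A_l^k$ for $k=0,\ldots,m-1$. A classical duality fact states that when $(A_l, B_l)$ is controllable, choosing $r_{l,m}$ as the last row of $\mathcal{M}_l^{-1}$ renders $(A_l, r_{l,m})$ observable; this is a routine check under the similarity transformation that brings $(A_l, B_l)$ into controllable canonical form. An even quicker argument leverages Ackermann's formula, already used in the proof of Lemma~\ref{lem:SimultaneousStabilization}: the affine map $c_l \mapsto c_l^\top V_l + v_l$ is a bijection from $\RR^m$ onto $\RR^{1\times m}$ (because in the single-input case every state-feedback gain is in one-to-one correspondence with a desired monic characteristic polynomial of degree $m$), so its linear part must be invertible.

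For item (2), I would verify $V V^\dagger = \I_{(M-1)m}$ and $V\Psi = \0$ by direct block multiplication. The $l$-th block row of $V$ has $V_l^\top$ in column $l$ and $-V_{l+1}^\top$ in column $l+1$ and is zero elsewhere; against $V^\dagger$, whose $(i,j)$ block is $(V_i^\top)^{-1}$ for $i\leq j\leq M-1$ and zero otherwise, one obtains $\I_m$ on each diagonal block and, for strictly upper-triangular block entries, the cancellation $V_i^\top (V_i^\top)^{-1} - V_{i+1}^\top (V_{i+1}^\top)^{-1} = \0$. An identical telescoping pattern yields $V\Psi = \0$. The existence of the right inverse $V^\dagger$ already implies that $V$ has full row rank, so $\ker V$ has dimension $Mm-(M-1)m=m$; since the $m$ columns of $\Psi$ are linearly independent (its top block $(V_1^\top)^{-1}$ is invertible) and all lie in $\ker V$, they form a basis of $\ker V$.

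For item (3), I would parameterize the solutions of $Vc=v$ as $c = V^\dagger v + \Psi w$ with $w\in\RR^m$. Substituting into the inequality $Hc\leq h$ of Lemma~\ref{lem:SimultaneousStabilization} yields the reduced LP~\eqref{eq:LP_suff_prop_red}. To recover the gain, I would focus on the last block of $c$: since the final block row of $V^\dagger$ vanishes, $c_M = (V_M^\top)^{-1}w$, so $V_M^\top c_M = w$, whence $K^\top = V_M^\top c_M + v_M^\top = w + v_M^\top$, i.e., $K = v_M + w^\top$. I expect the only nontrivial bookkeeping to be keeping the row/column conventions for $v_l$ (a row vector) and $c_l$ (a column vector) consistent through the transposition and block manipulations; beyond that, all three parts reduce to short block-algebraic computations that are structurally unavoidable given the form of $V$.
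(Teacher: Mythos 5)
Your proposal is correct and follows essentially the same route as the paper: invertibility of $V_l$ via linear independence of the rows $-r_{l,m}A_l^j$ (your duality/observability phrasing is just a repackaging of the paper's iterative argument multiplying by $B_l, A_lB_l,\dots$), direct block verification of $VV^\dagger=\I_{(M-1)m}$ and $V\Psi=\0$ plus rank--nullity for the kernel, and the parameterization $c=V^\dagger v+\Psi w$ substituted into $Hc\leq h$. The only cosmetic differences are that you extract the gain from the last block $c_M=(V_M^\top)^{-1}w$ rather than from $c_1$ as the paper does (both give $K=v_M+w^\top$ since all $c_l^\top V_l+v_l$ coincide on the solution set of $Vc=v$), and your alternative bijection argument for item (1) via unique pole placement is a valid shortcut.
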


\begin{proof}
\begin{enumerate}
	\item We prove that the matrix $V_l$ is invertible if the pair $(A_l,B_l)$ is controllable, by showing that the rows of $V_l$ are linearly independent.
	For $m=1$ it is trivial. For $m>1$, we denote the rows of the inverse of $\mathcal{M}_l$ as $$\mathcal{M}_l^{-1} = \begin{bmatrix}
	r_{l,1}^\top & \dots & r_{l,m}^\top
	\end{bmatrix}^\top.$$
	From the last row of the equality $\mathcal{M}_l^{-1}\mathcal{M}_l=\I_m$, one has that $r_{l,m}B_l = r_{l,m}A_lB_l = \dots = r_{l,m}A_l^{m-2}B_l = 0$ and $r_{l,m}A_l^{m-1}B_l=1$. 
	
	Considering the definition of $V_l$ in~\eqref{eq:V_l}, we show by contradiction that $\{-r_{l,m}A_l^j\}_{j\in \{0,\dots,m-1\}}$ are linearly independent. First, assume that they are linearly dependent.
	Then, there exists a nonzero vector $\beta = [\beta_0, \dots, \beta_{m-1} ]^\top\neq \0_{m}$ such that
	\begin{equation}\label{eq:betaVl}
	-\beta^\top V_l=\sum_{j=0}^{m-1}\beta_jr_{l,m}A_l^{j}=\0_{1\times m}.
	\end{equation}
	Multiplying \eqref{eq:betaVl} from the right by $B_l$ yields that
	\begin{equation*}
		\begin{split}
			\sum_{j=0}^{m-2}\beta_j\underbrace{r_{l,m}A_l^{j}B_l}_{=0} + \beta_{m-1}\underbrace{r_{l,m}A_l^{m-1}B_l}_{=1}=0
		\end{split}.
	\end{equation*}
	Thus, $\beta_{m-1}=0$ is implied by \eqref{eq:betaVl}. Then, one can rewrite \eqref{eq:betaVl} by removing the last term:
	\begin{equation}\label{eq:betaVl_min1}
	-\beta^\top V_l=\sum_{j=0}^{m-2}\beta_jr_{l,m}A_l^{j}=\0_{1\times m}.
	\end{equation}
	Again, one can multiply \eqref{eq:betaVl_min1} from the right with $A_lB_l$ to obtain
	\begin{equation*}
		\begin{split}
			\sum_{j=1}^{m-1}\beta_{j-1}r_{l,m}&A_l^{j}B_l=\sum_{j=1}^{m-2}\beta_{j-1}\underbrace{r_{l,m}A_l^{j}B_l}_{=0} \\
			&+ \beta_{m-2}\underbrace{r_{l,m}A_l^{m-1}B_l}_{=1}=\beta_{m-2}=0
		\end{split}.
	\end{equation*}
	By iterating the same procedure, one has that \eqref{eq:betaVl} implies $\beta = \0_m$, which is a contradiction. Therefore, the matrix $V_l$ is invertible.
	\item Given that the matrices $V_l$ are invertible, it is straightforward to see that the matrix $V$ has full row rank, i.e., $\mathrm{rank}(V) = (M-1)m$. Therefore, it is possible to find a right inverse for it. Given the definition of $V^\dagger$ in \eqref{eq:Vdagger}, we can verify that $VV^\dagger = \I_{(M-1)m}$.
	Moreover, from rank-nullity theorem, $\mathrm{dim}(\mathrm{ker}(V)) = Mm-\mathrm{rank}(V) = m$.
	As the full-rank matrix $\Psi\in\RR^{Mm\times m}$ given in \eqref{eq:Psi} satisfies $V\Psi=\0_{(M-1)m\times m}$, we conclude that its columns form a basis for $\mathrm{ker}(V)$.
	
	\item Lemma~\ref{lem:SimultaneousStabilization} shows that the pairs $(A_l,B_l)$ are simultaneously stabilizable if the LP \eqref{eq:LP_suff_prop} is feasible.
	Next, we will prove that the LP in \eqref{eq:LP_suff_prop} is equivalent to the LP in \eqref{eq:LP_suff_prop_red}, which has a smaller number of decision variables and constraints.
	Considering point \ref{itm:V_fullrank}) of this theorem, all solutions $c$ to the equality constraint in \eqref{eq:LP_suff_prop} can be written as \vspace{-0.3cm}
	\begin{equation}\label{eq:c_parametrization}
	c = V^\dagger v + \Psi  w
	\end{equation}
	for a free vector $ w \in \RR^{m}$, i.e., $w$ parametrizes all $c$ solving $Vc=v$.
	On replacing $c$ in the inequality constraint in \eqref{eq:LP_suff_prop} and removing the equality constraint, one obtains the equivalent reduced-order LP in \eqref{eq:LP_suff_prop_red}.
	Furthermore, in view of Lemma~\ref{lem:SimultaneousStabilization}, for a given solution $c$ to \eqref{eq:LP_suff_prop}, $K=c_1^\top V_1+v_1$ is a simultaneously stabilizing control gain.
	From the parameterization of $c$ in \eqref{eq:c_parametrization}, we have $c_1=(V_1^{-1})^\top(v_M^\top-v_1^\top+ w)$.
	Therefore, the common control gain can be calculated as $K=c_1^\top V_1+v_1 = v_M+ w^\top$, concluding the proof.
\end{enumerate}
\end{proof} \vspace{-0.3cm}

 \begin{remark}\label{rem:LP_simple}
 	The LPs~\eqref{eq:LP_suff_prop_red} and~\eqref{eq:LP_suff_prop} are equivalent in spite of the reduction of the number of decision variables from $Mm$ to $m$ and the elimination of equality constraints. As such, the LP in~\eqref{eq:LP_suff_prop_red} can be used to check the simultaneous stabilizability of a larger number of systems compared to~\eqref{eq:LP_suff_prop}.
 \end{remark}

\begin{remark}\label{rem:computation_of_inverses}
  The modified LP formulation \eqref{eq:LP_suff_prop_red}, compared to \eqref{eq:LP_suff_prop}, utilizes matrices $V^\dagger$ and $\Psi$ which, according to~\eqref{eq:Vdagger} and \eqref{eq:Psi}, can be performed by inverting the $m\times m$ matrices $V_l$ associated to individual agents.
\end{remark}

\begin{remark}\label{rem:small_difference_between_subsystems}
  As expected, \eqref{eq:LP_suff_prop_red} is always feasible if 
  $(A_l,B_l)=(A,B)$ for every $l$, as this condition implies
  $v=\0_{(M-1)m}$. Therefore, $h-HV^\dagger v=\1_{M2^m}$ and $ w = \0_m$ is a feasible solution to \eqref{eq:LP_suff_prop_red}. Moreover, $V^\dagger v$ changes continuously with the matrices $A_l$ and $B_l$ if all pairs $(A_l,B_l)$ are controllable. As such, the right-hand side of \eqref{eq:LP_suff_prop_red} is still nonnegative if the differences between the pairs $(A_l,B_l)$ are sufficiently small,
  and $w=\0_m$ is still a feasible solution to \eqref{eq:LP_suff_prop_red}.
  Therefore, a group of controllable systems is always simultaneously stabilizable if the pairs $(A_l,B_l)$ are sufficiently \textit{similar}.
\end{remark}\vspace{-0.4cm}

\section{Conditions for Consensusability of LIMASs}\label{sec:Consensusability}
        
In the sequel, we analyze the consensusability of LIMASs with subsystems of order $n=1$ and $n\geq 1$ separately. Indeed, the former case can be studied without restrictive assumptions while still giving important insight into the consensusability problem. Instead, the latter case is more difficult to analyze and requires additional assumptions. Table~\ref{tab:ConsensusabilityResults} summarizes the results presented in this section along with their applicability conditions. After presenting our results for these two cases in Sections~\ref{subsec:ScalarDynamics} and \ref{subsec:VectorDynamics}, respectively, we discuss the implications of the results in Section~\ref{subsec:DiscussionofResults}.

In the sequel, we use the definitions $\lambda_{p,\max}\triangleq\max_{i\in\{2,\dots,N\}}\lambda_{p,i}$, $\lambda_{p,\min}\triangleq\min_{i\in\{2,\dots,N\}}\lambda_{p,i}$, $\lambda_{c,\max}\triangleq\max_{i\in\{2,\dots,N\}}\lambda_{c,i}$, $\lambda_{c,\min}\triangleq\min_{i\in\{2,\dots,N\}}\lambda_{c,i}$, $\gamma_c \triangleq \frac{\lambda_{c,\max}}{\lambda_{c,\min}}$, and $\Delta_p \triangleq \lambda_{p,\max}-\lambda_{p,\min}$.

We call the scalar $\gamma_c\geq 1$ the \textit{eigenratio} of $\Lcal_c$, where $\gamma_c=1$ if and only if the graph is complete~\cite{you2011}. A low eigenratio means the graph is close to a complete graph. Furthermore, the eigenratio can be decreased by adding edges to the graph, meaning that a $\gamma_c$ close to $1$ generally implies a \textit{densely-connected} graph~\cite{barahona2002synchronization}. $\Delta_p$ denotes the difference between the largest and second smallest eigenvalues of $\Lcal_p$. A low $\Delta_p$ value indicates that the eigenvalues of $\Lcal_p$ are close to each other, which holds when the eigenratio of the physical interconnection graph is low, i.e., $\Gcal_p$ is densely-connected~\cite{you2011, barahona2002synchronization}. A low $\Delta_p$ value is also achieved if eigenvalues of $\Lcal_p$ are small, and consequently, the physical coupling between subsystems is weak. Our sufficient and necessary consensusability conditions are given in terms of these quantities.

\begin{table}
	\caption{Summary of Consensusability Results}
	\label{tab:ConsensusabilityResults}
	\setlength{\tabcolsep}{3pt}
	\begin{tabular}{|p{45pt}|p{45pt}|p{60pt}|p{70pt}|}
		\hline
		\textbf{Result} & \textbf{Type} & \textbf{Feature} & \textbf{Assumptions} \\
		\hline
		Theorem~\ref{prop:consensusability_scalar} & Sufficient & Algebraic Test & Scalar Subsystems \\
		Corollary~\ref{crl:consensusability_numerical} & Sufficient & Linear Program & Assumptions~\ref{ass:lapl_commute},~\ref{ass:controllability} \\
		Theorem~\ref{prop:consensusability_sufficient} & Sufficient & Algebraic Test & Assumptions~\ref{ass:lapl_commute},~\ref{ass:controllability},~\ref{ass:A_and_Ap} \\
		Theorem~\ref{prop:consensusability_necessary} & Necessary & Algebraic Test & Assumptions~\ref{ass:lapl_commute},~\ref{ass:controllability} \\
		\hline
	\end{tabular}\vspace{-0.4cm}
\end{table}\vspace{-0.3cm}

\subsection{Scalar subsystems} \label{subsec:ScalarDynamics}
In this subsection, we present results for scalar subsystems.
Without loss of generality, we assume $B=1$ and denote matrices $A$ and $K$  as $a$ and $k$, respectively.
To simplify the analysis, the matrix $A_p$ is omitted, as it can be lumped into the weights $a_{ij}$.
Then \eqref{eq:reduced-closed-loop-matrix} simplifies to
\begin{equation}\label{eq:deltatilde_dyn_scalar}
a\I_{N-1}- \tilde{\Lcal}_p+k \tilde{\Lcal}_c.
\end{equation}
Next, we present analytical sufficient conditions, which are based on results on the eigenvalues of the sum of two Hermitian matrices.

\begin{theorem}\label{prop:consensusability_scalar}[Scalar System Consensusability Condition]
The LIMAS~\eqref{eq:Si_dyn} with scalar subsystems is consensusable if either of the following conditions holds
        \begin{enumerate}
                \item[S1.] $\lambda_{p,\min}>a-1$ and $\left(\gamma_c-1\right)\left(1-a+\lambda_{p,\min}\right)<\gamma_c\left(2-\Delta_p\right)$,
                \item[S2.] $\lambda_{p,\max}<1+a$ and $\left(\gamma_c-1\right)\left(1+a-\lambda_{p,\max}\right)<\gamma_c\left(2-\Delta_p\right)$.
                \end{enumerate}

Moreover, the control gain $k$ can be selected as $k\in \KK^+\cap\RR_{\geq0}$ if S1 is satisfied and $k\in \KK^-\cap\RR_{<0}$ if S2 is satisfied, where
        \begin{equation}\label{eq:K+K-}
        \begin{split}
        \KK^+ &= \left(\frac{-1-a+\lambda_{p,\max}}{\lambda_{c,\min}},\frac{1-a+\lambda_{p,\min}}{\lambda_{c,\max}}\right) \\
        \KK^- &=  \left(\frac{-1-a+\lambda_{p,\max}}{\lambda_{c,\max}},\frac{1-a+\lambda_{p,\min}}{\lambda_{c,\min}}\right)
        \end{split}.
        \end{equation}
\end{theorem}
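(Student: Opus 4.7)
The plan is to reduce consensusability to Schur stability of the symmetric matrix $M(k)\triangleq a\I_{N-1}-\tilde{\Lcal}_p+k\tilde{\Lcal}_c$ from~\eqref{eq:deltatilde_dyn_scalar}, and then use Weyl's inequality for eigenvalues of sums of Hermitian matrices to bound the extremal eigenvalues of $M(k)$ without needing $\tilde{\Lcal}_p$ and $\tilde{\Lcal}_c$ to share an eigenbasis. Since $M(k)$ is symmetric, its eigenvalues are real and Schur stability is equivalent to $\lambda_{\min}(M(k))>-1$ and $\lambda_{\max}(M(k))<1$. Weyl gives
\begin{equation*}
\lambda_{\max}(M(k))\leq a-\lambda_{p,\min}+\lambda_{\max}(k\tilde{\Lcal}_c),\quad \lambda_{\min}(M(k))\geq a-\lambda_{p,\max}+\lambda_{\min}(k\tilde{\Lcal}_c),
\end{equation*}
and I would split on the sign of $k$ to evaluate $\lambda_{\max}(k\tilde{\Lcal}_c)$ and $\lambda_{\min}(k\tilde{\Lcal}_c)$ in terms of $\lambda_{c,\min},\lambda_{c,\max}$.

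For the case $k\geq 0$, I would substitute $\lambda_{\max}(k\tilde{\Lcal}_c)=k\lambda_{c,\max}$ and $\lambda_{\min}(k\tilde{\Lcal}_c)=k\lambda_{c,\min}$. Schur stability is then implied by the two linear inequalities $a-\lambda_{p,\min}+k\lambda_{c,\max}<1$ and $a-\lambda_{p,\max}+k\lambda_{c,\min}>-1$, which are equivalent to $k\in\KK^+$ as defined in~\eqref{eq:K+K-}. The existence of a nonnegative stabilizing $k$ thus reduces to $\KK^+\cap\RR_{\geq 0}\neq\emptyset$, which in turn decomposes into two requirements: (i) the right endpoint of $\KK^+$ is strictly positive, which is precisely $\lambda_{p,\min}>a-1$; (ii) $\KK^+$ is a nonempty interval, obtained by cross-multiplying the two endpoints by the positive quantity $\lambda_{c,\min}\lambda_{c,\max}$ and rearranging using $\gamma_c=\lambda_{c,\max}/\lambda_{c,\min}$ and $\Delta_p=\lambda_{p,\max}-\lambda_{p,\min}$. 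A short algebraic manipulation should show that (ii) is exactly $(\gamma_c-1)(1-a+\lambda_{p,\min})<\gamma_c(2-\Delta_p)$, recovering condition S1.

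The case $k\leq 0$ is handled by a symmetric argument, with the roles of $\lambda_{c,\min}$ and $\lambda_{c,\max}$ swapped in the Weyl bounds because multiplication by a nonpositive scalar reverses the ordering of eigenvalues, i.e., $\lambda_{\max}(k\tilde{\Lcal}_c)=k\lambda_{c,\min}$ and $\lambda_{\min}(k\tilde{\Lcal}_c)=k\lambda_{c,\max}$. The resulting pair of linear inequalities yields $k\in\KK^-$, the negativity requirement corresponds to $\lambda_{p,\max}<1+a$, and the nonemptiness of $\KK^-$ rearranges to condition S2.

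The proof is essentially conceptual (Weyl + sign analysis) and the main obstacle I anticipate is purely bookkeeping: keeping signs straight so that the correct pairing of $\lambda_{c,\min}/\lambda_{c,\max}$ is used for each Weyl bound, and executing the cross-multiplications so that the final algebraic form matches the literal statements of S1 and S2. One small subtlety worth flagging is that the Weyl bounds are only sufficient (not tight, since $\tilde{\Lcal}_p$ and $\tilde{\Lcal}_c$ need not commute), so the theorem provides a sufficient but not necessary consensusability condition, which is consistent with the statement.
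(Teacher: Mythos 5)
Your proposal is correct and follows essentially the same route as the paper: reduce to Schur stability of the symmetric matrix in~\eqref{eq:deltatilde_dyn_scalar}, apply the Weyl-type eigenvalue bounds for sums of Hermitian matrices (the paper cites Theorems~4.3.7 and 4.3.27 of~\cite{horn1985}), split on the sign of $k$ to obtain $\KK^+$ and $\KK^-$, and translate the nonemptiness of $\KK^\pm\cap\RR_{\gtrless 0}$ into S1 and S2 by cross-multiplication. Your remark that the bounds do not require the two Laplacians to share an eigenbasis, and hence the condition is only sufficient, matches the paper's Remark~\ref{rem:scalar_conservativity}.
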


\begin{proof}
Since $a\I_{N-1}-\tilde{\Lcal}_p$ and $k\tilde{\Lcal}_c$ are symmetric matrices, upper and lower bounds on the eigenvalues of their summation can be found as in~Theorems~4.3.7 and 4.3.27 in~\cite{horn1985}:
\small
\begin{equation}\label{eq:scalar_eigenvaluebounds}
\begin{split}
\lambda_{\max}\left(a\I_{N-1}-\tilde{\Lcal}_p+k\tilde{\Lcal}_c\right) &\leq \lambda_{\max}\left(a\I_{N-1}-\tilde{\Lcal}_p\right) + \lambda_{\max}\left(k\tilde{\Lcal}_c\right) \\
\lambda_{\min}\left(a\I_{N-1}-\tilde{\Lcal}_p+k\tilde{\Lcal}_c\right) &\geq \lambda_{\min}\left(a\I_{N-1}-\tilde{\Lcal}_p\right) + \lambda_{\min}\left(k\tilde{\Lcal}_c\right) \\
\end{split}.
\end{equation}
\normalsize

Therefore, \eqref{eq:deltatilde_dyn_scalar} is Schur stable if $k \in \RR$ satisfies
\begin{equation}
\label{eq:ktilde_conditions}
\begin{split}
&a-\lambda_{p,\min}+\lambda_{\max}\left(k\tilde{\Lcal}_c\right)<1  \quad \text{and}\\
& a-\lambda_{p,\max}+\lambda_{\min}\left(k\tilde{\Lcal}_c\right)>-1.
\end{split}
\end{equation}

Since the sign of $k$ changes the expression of $\lambda_{\min}(k\tilde{\Lcal}_c)$ and $\lambda_{\max}(k\tilde{\Lcal}_c)$, we inspect the two possibilities $k\geq 0$ and $k<0$ separately.
For $k\ge 0$, we have $\lambda_{\min}(k\tilde{\Lcal}_c) = k\lambda_{c,\min}$ and $\lambda_{\max}(k\tilde{\Lcal}_c) = k\lambda_{c,\max}$, and the conditions~\eqref{eq:ktilde_conditions} simplify to $k\in\KK^+\cap\RR_{\geq0}$, where $\KK^+$ is as given in \eqref{eq:K+K-}.
This is possible if $\KK^+\neq \emptyset$ and $\KK^+\cap\RR_{\geq0}\neq\emptyset$, which directly translate into the conditions in S1.
The result for $k<0$ can be proved similarly, completing the proof.
\end{proof}

\begin{remark}\label{rem:scalar_conservativity}
The only sources of conservativity in Theorem~\ref{prop:consensusability_scalar} are the upper and lower bounds used in~\eqref{eq:scalar_eigenvaluebounds}. Note that the equalities in~\eqref{eq:scalar_eigenvaluebounds} hold when $\tilde{\Lcal}_p=\0_{(N-1)\times (N-1)}$, i.e., there is no physical coupling. This, in turn, means that the conditions S1 and S2 are necessary and sufficient when there is no physical coupling. In this case, Theorem~\ref{prop:consensusability_scalar} recovers the necessary and sufficient condition for consensusability of MASs in~\cite{you2011}.
\end{remark}
As discussed later in Section~\ref{subsec:DiscussionofResults}, conditions S1 and S2 help understanding the roles of the physical interconnection and communication graphs in consensusability. We next analyze the consensusability problem for general subsystems.\vspace{-0.4cm}

\subsection{General subsystems}\label{subsec:VectorDynamics}
From~\eqref{eq:reduced-closed-loop-matrix}, the consensusability problem can be seen as the problem of designing a control gain $\tilde{K}=(\I_{N-1}\otimes K) \in \RR^{(N-1) \times (N-1)n}$ with structural constraints to make the matrix
        $$\I_{N-1}\otimes A-\tilde{\Lcal}_p\otimes A_p+\left(\tilde{\Lcal}_c\otimes B\right)\tilde{K}$$
        Schur stable. Prior work shows that this problem is difficult to tackle without focusing on special system structures~\cite{rotkowitz2005}.
        Therefore, to facilitate the analysis, we introduce the following technical assumption.
\begin{assumption}\label{ass:lapl_commute}
        The Laplacian matrices $\Lcal_p$ and $\Lcal_c$ commute.
\end{assumption}

\begin{remark}\label{rem:commutingLaplacians}
 Assumption~\ref{ass:lapl_commute} is fulfilled when the two Laplacians are equal to each other up to scaling with a constant, i.e., $\Lcal_p=\beta \Lcal_c$, $\beta\in\RR_{\geq 0}$.
  Moreover, two Laplacians commute also when one of them is the Laplacian of a complete graph with uniform edge weights.
  Nevertheless, a necessary and sufficient condition for two generic Laplacians to commute is not yet available in the literature.
\end{remark}

Under Assumption~\ref{ass:lapl_commute}, one can simultaneously diagonalize the two Laplacians $\Lcal_p$ and $\Lcal_c$~\cite{horn1985}, i.e., a unitary transformation matrix $\Phi$ can be chosen such that $\Phi^{\top}\Lcal_p\Phi = \Lambda_p = \mathrm{diag}\left(0,\lambda_{p,2},\dots,\lambda_{p,N}\right)$ and $\Phi^{\top}\Lcal_c\Phi = \Lambda_c = \mathrm{diag}\left(0,\lambda_{c,2},\dots,\lambda_{c,N}\right)$. Note that, for each $i=2,\dots,N$, $\lambda_{p,i}$ and $\lambda_{c,i}$ have the same eigenspace. As such, the diagonal entries of $\Lambda_p$ and $\Lambda_c$, hence $\lambda_{p,i}$ and $\lambda_{c,i}$, are not necessarily ordered by their magnitude.
Assumption~\ref{ass:lapl_commute} thus allows one to decouple the dynamics of $\tilde{\delta}_i$, $i\in \{2,\dots,N\}$ from each other:
\begin{equation}\label{eq:deltatildei_dyn}
\tilde{\delta}_i^+ = \left(A-\lambda_{p,i}A_p+\lambda_{c,i}BK\right)\tilde{\delta}_i \quad \forall i\in\{2,\dots,N\}.
\end{equation}
Consequently, the consensusability of \eqref{eq:LIMAS_dyn} is equivalent to the simultaneous stabilizability of \eqref{eq:deltatildei_dyn}.
The following assumption is required before further derivations.

\begin{assumption}\label{ass:controllability}
        The pairs $(A-\lambda_{p,i}A_p,B)$ are controllable for all $i\in \{2,\dots,N\}$.
\end{assumption}

\begin{remark}
Assumption~\ref{ass:controllability} is necessary as the controllability of the pair $(A-\lambda_{p,i}A_p,B)$ is not implied by that of the pair $(A,B)$ in general.
We also stress that the controllability of $(A-\lambda_{p,i}A_p,B)$ implies that of $(A-\lambda_{p,i}A_p,\lambda_{c,i}B)$, as $\lambda_{c,i}>0,$ $\forall i\in\{2,\dots,N\}$.
\end{remark}

Defining $A_i\triangleq A-\lambda_{p,i}A_p$ and $B_i\triangleq \lambda_{c,i}B$, $i\in\{2,\dots,N\}$, consensusability of LIMAS~\eqref{eq:LIMAS_dyn} is equivalent to the simultaneous stabilizability of pairs $(A_i,B_i)$, which is the problem addressed in Section~\ref{sec:SimultaneousStabilization}.
As mentioned in that section, this problem is difficult to solve in general and we separate our discussion in two parts.
Firstly, a numerical sufficient condition for consensusability is proposed based on Theorem~\ref{thm:SimultaneousStabilization}.
Secondly, for special LIMASs with $A_p=\alpha A$ for some scalar $\alpha$, we give an algebraic sufficient condition for consensusability in Theorem~\ref{prop:consensusability_sufficient}.
Finally, in Theorem~\ref{prop:consensusability_necessary}, we present a necessary condition for the consensusability of \eqref{eq:LIMAS_dyn}.

\subsubsection{Sufficient Conditions}
The following corollary presents an LP-based test for consensusability with generic system matrices $A$, $A_p$, and $B$.
The result directly follows from Theorem~\ref{thm:SimultaneousStabilization}; therefore, the proof is omitted.

\begin{corollary} \label{crl:consensusability_numerical}[Numerical Sufficient Consensusability Condition]
        Suppose that Assumptions~\ref{ass:lapl_commute} and \ref{ass:controllability} hold. The LIMAS \eqref{eq:LIMAS_dyn} is consensusable if there exists a vector $ w \in \RR^n$ satisfying \eqref{eq:LP_suff_prop_red}, where the matrices $V$, $v$, $H$, $h$, $V^\dagger$, and $\Psi$ are computed as in \eqref{eq:LP_suff_prop}-\eqref{eq:V_H_definitions},\eqref{eq:Vdagger}-\eqref{eq:Psi}, by replacing $(A_l,B_l)$ with $(A-\lambda_{p,i}A_p, \lambda_{c,i}B)$. Moreover, if such a $ w$ exists, a control gain achieving consensus is given by $K=v_N+ w^\top$.
\end{corollary}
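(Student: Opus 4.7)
The plan is to derive the corollary as a direct specialization of Theorem~\ref{thm:SimultaneousStabilization} applied to the per-mode closed-loop dynamics in~\eqref{eq:deltatildei_dyn}. The structural work has already been done earlier in the section: once Assumption~\ref{ass:lapl_commute} is imposed, the simultaneous diagonalization of $\Lcal_p$ and $\Lcal_c$ decouples~\eqref{eq:reduced-closed-loop-matrix} mode by mode, so that consensusability of~\eqref{eq:LIMAS_dyn} becomes equivalent to the existence of a single gain $K$ rendering every matrix $A - \lambda_{p,i} A_p + \lambda_{c,i} BK$, $i \in \{2,\dots,N\}$, Schur stable.

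First, I would recast this family of closed-loop matrices as the generic simultaneous stabilization problem of Section~\ref{sec:SimultaneousStabilization}. Setting $M = N - 1$ and identifying the running index via $l \leftrightarrow i - 1$, the pairs $(A_l, B_l)$ are taken to be $(A - \lambda_{p,i} A_p,\; \lambda_{c,i} B)$. With this identification, the problem of finding a common $K$ that stabilizes all modes of~\eqref{eq:deltatildei_dyn} coincides verbatim with the simultaneous stabilization problem treated by Theorem~\ref{thm:SimultaneousStabilization}.

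Second, I would check that the hypotheses of Theorem~\ref{thm:SimultaneousStabilization} are met. Assumption~\ref{ass:controllability} gives controllability of every pair $(A - \lambda_{p,i} A_p,\; B)$, and since $\Gcal_c$ is connected we have $\lambda_{c,i} > 0$ for all $i \geq 2$, so controllability carries over to $(A - \lambda_{p,i} A_p,\; \lambda_{c,i} B)$. Hence the data $V$, $v$, $H$, $h$, $V^\dagger$ and $\Psi$ built from these matrices via~\eqref{eq:LP_suff_prop}--\eqref{eq:V_H_definitions} and~\eqref{eq:Vdagger}--\eqref{eq:Psi} are well-defined, and part~3 of Theorem~\ref{thm:SimultaneousStabilization} applies: feasibility of~\eqref{eq:LP_suff_prop_red} delivers a gain $K = v_M + w^\top$ that stabilizes every $(A_l,B_l)$ simultaneously. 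Translating back through $M = N - 1$ so that $v_M$ is the vector $v_l$ associated with the largest index $i = N$ in the corollary's enumeration, the gain reads $K = v_N + w^\top$, yielding the stated formula.

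There is no genuine mathematical obstacle here: once the decoupling step~\eqref{eq:deltatildei_dyn} is in place, the corollary is a pure specialization of Theorem~\ref{thm:SimultaneousStabilization}. The only point requiring a modicum of care is the index bookkeeping, namely making sure the enumeration $i = 2, \dots, N$ of Laplacian modes is consistently mapped onto the enumeration $l = 1, \dots, M$ of systems in Section~\ref{sec:SimultaneousStabilization}, so that the final gain is correctly written in terms of $v_N$ rather than $v_M$.
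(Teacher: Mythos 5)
Your proposal is correct and matches the paper's intended argument exactly: the paper omits the proof precisely because the corollary ``directly follows'' from Theorem~\ref{thm:SimultaneousStabilization} once the decoupling~\eqref{eq:deltatildei_dyn} under Assumptions~\ref{ass:lapl_commute} and~\ref{ass:controllability} is in place, which is the specialization you carry out. Your additional care with the index mapping $l \leftrightarrow i-1$ (so that $v_M$ becomes $v_N$) and with controllability being preserved under scaling $B$ by $\lambda_{c,i}>0$ simply makes explicit what the paper leaves implicit.
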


Next, we provide an analytical sufficient condition for consensusability of LIMASs. As discussed in Section~\ref{sec:SimultaneousStabilization}, it is difficult to provide analytical conditions on simultaneous stabilizability of more than three systems~\cite{blondel1993}. For this reason, we limit our analysis to LIMASs verifying the following condition.

\begin{assumption}\label{ass:A_and_Ap}
        The physical interconnection matrix $A_p$ satisfies $A_p = \alpha A$, where $\alpha\in \RR$.
\end{assumption}

Under this assumption, consensusability is equivalent to the simultaneous stabilizability of pairs $((1-\alpha\lambda_{p,i})A,\lambda_{c,i}B)$, i.e., $A_i$ and $B_i$ are now only characterized by scalar multiplications of common matrices $A$ and $B$, respectively. This allows for easier analysis of Schur stability of matrices $A_i+B_iK$, for a given control gain $K$. For this purpose, we first define $\alpha_i\triangleq 1-\alpha\lambda_{p,i}$, $\alpha_{\max}\triangleq \max_i|\alpha_i|$, $\alpha_{\min}\triangleq \min_i|\alpha_i|$, $\bar{A}\triangleq \alpha_{\max}A$, and
        \begin{equation}\label{eq:sigma_c}
        \sigma_c\triangleq
        1-\frac{1}{\prod_{k}\left|\lambda_k^u\left(\bar{A}\right)\right|^2},
        \end{equation}
        where $\lambda^u_1(\bar{A})$, $\lambda^u_2(\bar{A})$, $\dots$ denote the unstable eigenvalues of $\bar{A}$.
Moreover, in the sequel, we leverage the results in~\cite{schenato2007} stating that if $\sigma>\sigma_c$, there exists a matrix $P=P^\top\succ 0$ solving the modified algebraic Riccati equation (MARE)
        \begin{equation}
		\label{eq:MARE}
        \begin{split}
                \bar{A}^{\top}P\bar{A}-\sigma\bar{A}^{\top}PB\left(B^{\top}PB\right)^{-1}B^{\top}P\bar{A}-P\prec 0
        \end{split}.
        \end{equation}

        The following theorem presents an algebraic sufficient condition for consensusability under Assumption~\ref{ass:A_and_Ap}. Note that, unlike Corollary~\ref{crl:consensusability_numerical}, following results do not require the knowledge of the eigenvalue pairs $(\lambda_{p,i},\lambda_{c,i})$ associated to the same eigenspace.

\begin{theorem}\label{prop:consensusability_sufficient}[Analytical Sufficient Consensusability Condition]
  Suppose that Assumptions~\ref{ass:lapl_commute}, \ref{ass:controllability}, and \ref{ass:A_and_Ap} hold.
  If  $\bar{A}$ is Schur stable, the LIMAS in \eqref{eq:LIMAS_dyn} is consensusable by using the control gain $K=\0_{1\times n}$.
  Besides, if $\bar{A}$ is not Schur stable, and the following condition holds
        \begin{equation}\label{eq:consensusability_sufficient_critical}
        \left(\frac{\max_{i,j}\frac{\alpha_i}{\lambda_{c,j}}-\min_{i,j}\frac{\alpha_i}{\lambda_{c,j}}}{2}\right)^2<\frac{\alpha_{\min}^2-\alpha_{\max}^2\sigma_c}{\lambda_{c,\max}^2},
        \end{equation}
        the LIMAS in \eqref{eq:LIMAS_dyn} is consensusable by using the control gain $K=-k^*(B^{\top}PB)^{-1}B^{\top}PA$ where $k^*=\frac{\min_{i,j}\frac{\alpha_i}{\lambda_{c,j}}+\max_{i,j}\frac{\alpha_i}{\lambda_{c,j}}}{2}$ and $P$ is the solution to the MARE \eqref{eq:MARE} with $\sigma=                \min_{i,j}\frac{2\alpha_i\lambda_{c,j}k^*-\lambda_{c,j}^2(k^*)^2}{\alpha_{\max}^2}$.
\end{theorem}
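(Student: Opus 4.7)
Under Assumption~\ref{ass:A_and_Ap}, the mode dynamics~\eqref{eq:deltatildei_dyn} reduce to $\tilde\delta_i^+=(\alpha_i A+\lambda_{c,i} B K)\tilde\delta_i$, so consensusability becomes equivalent to choosing one gain $K$ that makes $M_i\triangleq \alpha_i A+\lambda_{c,i} B K$ Schur stable for every $i\in\{2,\dots,N\}$. When $\bar A=\alpha_{\max}A$ is Schur stable, $\rho(A)<1/\alpha_{\max}$ and the choice $K=\0_{1\times n}$ already gives $\rho(M_i)=|\alpha_i|\rho(A)\le\alpha_{\max}\rho(A)<1$, settling the first claim. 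The rest of the plan handles the non-Schur case by building a common quadratic Lyapunov certificate out of the MARE~\eqref{eq:MARE}.

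For the non-Schur case, I would adopt the Riccati-shaped gain $K=-k^*(B^\top P B)^{-1}B^\top P A$ and seek $P\succ 0$ such that $M_i^\top P M_i-P\prec 0$ simultaneously for every $i$. Expanding $M_i^\top P M_i$ and exploiting the idempotence of $B(B^\top P B)^{-1}B^\top P$ collapses the cross and quadratic feedback contributions into a single matrix $A^\top P B(B^\top P B)^{-1}B^\top P A$, producing
\begin{equation*}
M_i^\top P M_i - P = \frac{\alpha_i^2}{\alpha_{\max}^2}\bar A^\top P\bar A - \sigma_i\,\bar A^\top P B(B^\top P B)^{-1}B^\top P\bar A - P,
\end{equation*}
with $\sigma_i\triangleq(2\alpha_i\lambda_{c,i}k^*-\lambda_{c,i}^2(k^*)^2)/\alpha_{\max}^2$. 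Since $\alpha_i^2/\alpha_{\max}^2\le 1$ and $\bar A^\top P\bar A\succeq 0$, any $\sigma$ with $\sigma_c<\sigma\le\min_i\sigma_i$ lets us invoke the MARE~\eqref{eq:MARE} to obtain $P\succ 0$, and the monotonicity of the above expression in the scalars $\alpha_i^2/\alpha_{\max}^2$ and $\sigma_i$ then yields $M_i^\top P M_i\prec P$ for every $i$.

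Because the pairing between $\lambda_{p,i}$ and $\lambda_{c,i}$ along a common eigenspace is not assumed to be known, I would lower-bound $\min_i\sigma_i$ by the coarser quantity $\sigma=\min_{i,j}(2\alpha_i\lambda_{c,j}k^*-\lambda_{c,j}^2(k^*)^2)/\alpha_{\max}^2$ and pick $k^*$ to maximize this double minimum. Using the identity $2\alpha_i\lambda_{c,j}k-\lambda_{c,j}^2 k^2=\alpha_i^2-(\lambda_{c,j}k-\alpha_i)^2$, the problem becomes a Chebyshev-type min-max in the scalar $k$, whose saddle point is the midpoint $k^*=\bigl(\min_{i,j}\alpha_i/\lambda_{c,j}+\max_{i,j}\alpha_i/\lambda_{c,j}\bigr)/2$ and whose optimal value admits the lower bound $\alpha_{\min}^2-\lambda_{c,\max}^2\bigl((\max_{i,j}\alpha_i/\lambda_{c,j}-\min_{i,j}\alpha_i/\lambda_{c,j})/2\bigr)^2$. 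Imposing $\sigma>\sigma_c$ on this bound is a direct algebraic rearrangement that yields~\eqref{eq:consensusability_sufficient_critical}. The main delicate step I anticipate is the min-max book-keeping: verifying that the crude decoupling $\lambda_{c,j}^2\le\lambda_{c,\max}^2$ and the replacement of the diagonal pairing by the full $(i,j)$ grid can coexist while still recovering~\eqref{eq:consensusability_sufficient_critical} as the exact scalar condition, and confirming that the midpoint $k^*$ indeed attains the worst case at both extreme ratios $\alpha_i/\lambda_{c,j}$.
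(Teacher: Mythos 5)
Your proposal is correct and follows essentially the same route as the paper's proof: the reduction to simultaneous Schur stability of $\alpha_i A+\lambda_{c,i}BK$, the common quadratic Lyapunov certificate obtained from the MARE~\eqref{eq:MARE} with the Riccati-shaped gain, the Chebyshev-midpoint choice of $k^*$ over the full $(i,j)$ grid, and the bounds $\min_i\alpha_i^2=\alpha_{\min}^2$ and $\lambda_{c,j}^2\le\lambda_{c,\max}^2$ that turn $\sigma>\sigma_c$ into~\eqref{eq:consensusability_sufficient_critical}. The only difference is presentational (you set up the Lyapunov inequality first and then verify $\sigma>\sigma_c$, whereas the paper does the reverse), and the "delicate" min-max bookkeeping you flag goes through exactly as you describe.
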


\begin{proof}
Under Assumptions~\ref{ass:lapl_commute} and \ref{ass:A_and_Ap}, consensusability of \eqref{eq:LIMAS_dyn} is equivalent to the simultaneous stabilizability of
\begin{equation}\label{eq:deltatilde_dyn_assumptionAp}
\tilde{\delta}_i^+ = \left(\alpha_iA+\lambda_{c,i}BK\right)\tilde{\delta}_i \quad \forall i\in\{2,\dots,N\}.
\end{equation}
It is straightforward to see that, if $\bar{A}$ is Schur stable, so are $\alpha_iA$; therefore, the LIMAS \eqref{eq:LIMAS_dyn} is consensusable with the control gain $K=\0_{1\times n}$.

Next, we will show that when $\bar{A}$ is not Schur stable, and if \eqref{eq:consensusability_sufficient_critical} holds, the LIMAS \eqref{eq:LIMAS_dyn} is consensusable by the control gain designed in the theorem. For this purpose, we will first show in the sequel that if \eqref{eq:consensusability_sufficient_critical} holds, we have
\begin{equation}\label{eq:min_ij}
\min_{i,j}\frac{2\alpha_i\lambda_{c,j}k^*-\lambda_{c,j}^2(k^*)^2}{\alpha_{\max}^2}>\sigma_c.
\end{equation}

Defining a function $f(k) \triangleq \max_{i,j}|k-\frac{\alpha_i}{\lambda_{c,j}}| \geq 0$, it is straightforward to calculate its optimizer
$$k^*=\arg \min_kf(k)=\frac{\min_{i,j}\frac{\alpha_i}{\lambda_{c,j}}+\max_{i,j}\frac{\alpha_i}{\lambda_{c,j}}}{2}.$$
Moreover, one can show that\vspace{-0.2cm}
\begin{equation*}
	\arg \min_k \max_{i,j}\left(k-\frac{\alpha_i}{\lambda_{c,j}}\right)^2=\arg \min_{k}f(k)=k^*;
\end{equation*}
therefore, it directly follows that\vspace{-0.2cm}
\begin{equation*}
	\begin{split}
		\min_k \max_{i,j}\left(k-\frac{\alpha_i}{\lambda_{c,j}}\right)^2
		&=\left(\frac{\max_{i,j}\frac{\alpha_i}{\lambda_{c,j}}-\min_{i,j}\frac{\alpha_i}{\lambda_{c,j}}}{2}\right)^2
	\end{split}.
\end{equation*}
Thus, in view of \eqref{eq:consensusability_sufficient_critical}, we have\vspace{-0.2cm}
\begin{equation}\label{eq:minmax}
\max_{i,j}\left(k^*-\frac{\alpha_i}{\lambda_{c,j}}\right)^2 <\frac{\alpha_{\min}^2-\alpha_{\max}^2\sigma_c}{\lambda_{c,\max}^2}.
\end{equation}
Noticing that\vspace{-0.2cm}
$$\lambda_{c,j}^2\left(k^*-\frac{\alpha_i}{\lambda_{c,j}}\right)^2 = -(2\alpha_i\lambda_{c,j}k^*-\lambda_{c,j}^2(k^*)^2) +\alpha_i^2,$$
we get \vspace{-0.2cm}
\small
\begin{equation}\label{eq:minusminleq}
\begin{split}
\max_{i,j}\lambda_{c,j}^2\left(k^*-\frac{\alpha_i}{\lambda_{c,j}}\right)^2 \geq -\min_{i,j}\left(2\alpha_i\lambda_{c,j}k^*-\lambda_{c,j}^2(k^*)^2\right)+\alpha_{\min}^2.
\end{split}
\end{equation} \vspace{-0.5cm}
\normalsize

\noindent Furthermore, from the fact that $$ \lambda_{c,\max}^2\max_{i,j}\left(k^*-\frac{\alpha_i}{\lambda_{c,j}}\right)^2\ge \max_{i,j}\lambda_{c,j}^2\left(k^*-\frac{\alpha_i}{\lambda_{c,j}}\right)^2,$$
and using \eqref{eq:minmax}, \eqref{eq:minusminleq}, we can obtain
\begin{equation}\label{eq:alphamax_sigmac}
-\min_{i,j}\left(2\alpha_i\lambda_{c,j}k^*-\lambda_{c,j}^2(k^*)^2\right)+\alpha_{\min}^2 < \alpha_{\min}^2-\alpha_{\max}^2\sigma_c.
\end{equation}
This inequality is equivalent to \eqref{eq:min_ij}; therefore, a solution $P$ to the MARE~\eqref{eq:MARE} exists with $$\sigma= \min_{i,j}\frac{2\alpha_i\lambda_{c,j}k^*-\lambda_{c,j}^2(k^*)^2}{\alpha_{\max}^2}>\sigma_c.$$
Further, defining $\sigma_i\triangleq \frac{2\alpha_i\lambda_{c,i}k^*-\lambda_{c,i}^2(k^*)^2}{\alpha_{\max}^2}$\vspace{0.05cm}, we see that $\sigma \leq \sigma_i$ holds by definition. Therefore, $P$ also satisfies the following for each $i$, \vspace{-0.2cm}
\begin{equation*}
	\begin{split}
		\bar{A}^{\top}P\bar{A}-\sigma_i\bar{A}^{\top}PB\left(B^{\top}PB\right)^{-1}B^{\top}P\bar{A}-P\prec 0
	\end{split},
\end{equation*}\vspace{-0.6cm}

\noindent which further implies
\small
\begin{equation}\label{eq:lyapunovcond_max}
\begin{split}
\alpha_{\max}^2A^{\top}PA+&\left(\lambda_{c,i}^2(k^*)^2-2\alpha_i\lambda_{c,i}k^*\right) \\
&A^{\top}PB\left(B^{\top}PB\right)^{-1}B^{\top}PA 
-P\prec 0.
\end{split}
\end{equation}
\normalsize
This is equivalent to the existence of $P\succ 0$ and $K = -k^*(B^{\top}PB)^{-1}B^{\top}PA$ such that $$\left(\alpha_iA+\lambda_{c,i}BK\right)^{\top}P\left(\alpha_iA+\lambda_{c,i}BK\right)-P\prec 0,$$ for all $i\in\{2,\dots,N\}$.
Therefore, the systems~\eqref{eq:deltatilde_dyn_assumptionAp} are simultaneously stabilizable, which further shows that the LIMAS \eqref{eq:LIMAS_dyn} is consensusable by the designed controller.
\end{proof}

The algebraic conditions in Theorem~\ref{prop:consensusability_sufficient} are easily verifiable and do not require to solve any optimization problem. Moreover, they provide important insights about the effects of physical interconnection and communication graphs on consensusability, as discussed in Section~\ref{subsec:DiscussionofResults}. We stress that, this result is conservative because the derivations \eqref{eq:min_ij}-\eqref{eq:alphamax_sigmac} consider the worst-case scenario in terms of eigenvalue pairs $(\lambda_{p,i},\lambda_{c,i})$. Indeed, without physical couplings ($\alpha=0$), the inequality~\eqref{eq:consensusability_sufficient_critical} cannot recover the sufficient and necessary condition for consensusability of MASs proposed in~\cite{you2011}.

\subsubsection{Necessary Conditions}
We now provide algebraic necessary conditions for the consensusability of the LIMAS in~\eqref{eq:LIMAS_dyn}, which will allow us to identify features of the LIMAS disfavoring consensusability.
We note that the following theorem does not require Assumption~\ref{ass:A_and_Ap}.

\begin{theorem}\label{prop:consensusability_necessary}[Necessary Consensusability Condition]
        Suppose that Assumptions~\ref{ass:lapl_commute} and \ref{ass:controllability} hold. If the LIMAS in \eqref{eq:LIMAS_dyn} is consensusable, at least one of the following conditions N1-N3 holds:
        \begin{enumerate}
                \item [N1.] $\max_i|\det(A_i)|<1$,
                \item [N2.] $\max_i|\det(A_i)|\geq1$ and $\min_i|\det(A_i)|<1$ and \eqref{eq:N2_secondeq},
	            \item [N3.] $\min_i|\det(A_i)|\geq1$ and \eqref{eq:N2_secondeq} and \eqref{eq:N3_thirdeq},
        \end{enumerate}
    	where $A_i = A-\lambda_{p,i}A_p$ and 
    	\begin{equation}\label{eq:N2_secondeq}
    	\max_i|\det(A_i)|-1 < \gamma_c\min_i|\det(A_i)|+\gamma_c,
    	\end{equation}
    	\begin{equation}\label{eq:N3_thirdeq}
    	\gamma_c(\min_i|\det(A_i)|-1) < \max_i|\det(A_i)|+1.
    	\end{equation}
\end{theorem}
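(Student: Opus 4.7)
The plan is to establish the contrapositive: whenever the LIMAS is consensusable, at least one of N1--N3 must hold. By Assumption~\ref{ass:lapl_commute} and the decomposition~\eqref{eq:deltatildei_dyn}, consensusability is equivalent to the existence of a single row vector $K\in\RR^{1\times n}$ that Schur-stabilizes every closed-loop matrix $M_i:=A_i+\lambda_{c,i}BK$ for $i\in\{2,\dots,N\}$. Schur stability immediately implies $|\det(M_i)|<1$ for every $i$, and the plan is to turn this family of scalar inequalities into the quantitative conditions~\eqref{eq:N2_secondeq}--\eqref{eq:N3_thirdeq}.

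First I would apply the matrix determinant lemma to the rank-one perturbation $\lambda_{c,i}BK$ of $A_i$, obtaining
\[
\det(M_i)\;=\;d_i+\lambda_{c,i}\phi_i,\qquad \phi_i\triangleq K\,\mathrm{adj}(A_i)\,B,
\]
where $d_i:=\det(A_i)$ and each $\phi_i$ is a scalar. Schur stability therefore yields $|d_i+\lambda_{c,i}\phi_i|<1$ for every $i\in\{2,\dots,N\}$. These are $N-1$ scalar inequalities coupled by the fact that the $\phi_i$'s are values of the same linear functional $K\mapsto K\xi$ applied to different columns $\xi_i=\mathrm{adj}(A_i)B$.

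Next I would split the argument into three mutually exclusive cases aligned with N1--N3. If $\max_i|d_i|<1$, condition N1 is satisfied directly. Otherwise, let $i^{\star}$ and $j^{\star}$ attain $\max_i|d_i|$ and $\min_i|d_i|$ respectively. The reverse triangle inequality applied at $i^{\star}$ yields $\lambda_{c,i^{\star}}|\phi_{i^{\star}}|>|d_{i^{\star}}|-1$, hence $|\phi_{i^{\star}}|>(|d_{i^{\star}}|-1)/\lambda_{c,\max}$, while the forward triangle inequality at $j^{\star}$ yields $|\phi_{j^{\star}}|<(|d_{j^{\star}}|+1)/\lambda_{c,\min}$. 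In the regime $\min_i|d_i|<1\le\max_i|d_i|$ I would combine these two bounds to produce the single inequality~\eqref{eq:N2_secondeq}; in the regime $1\le\min_i|d_i|$ the additional sign constraint forcing $d_{j^{\star}}+\lambda_{c,j^{\star}}\phi_{j^{\star}}$ to sit inside $(-1,1)$ while $|d_{j^{\star}}|\ge 1$ gives a second bound, yielding~\eqref{eq:N3_thirdeq} as well.

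The main obstacle will be this final consolidation step. The scalars $\phi_{i^{\star}}$ and $\phi_{j^{\star}}$ arise from one common $K$ evaluated on different adjugate-dependent vectors and are not directly comparable by a scalar manipulation. The argument must exploit the sign compatibility dictated by Schur stability---when $|d_{i^{\star}}|\ge 1$, the scalar $\phi_{i^{\star}}$ must have the sign opposite to $d_{i^{\star}}$ in order to pull $d_{i^{\star}}+\lambda_{c,i^{\star}}\phi_{i^{\star}}$ inside $(-1,1)$, and analogously at $j^{\star}$ under~N3---and the extremal eigenvalues $\lambda_{c,\min},\lambda_{c,\max}$ of $\tilde{\Lcal}_c$, in order to replace two independent one-sided bounds by one bound in terms of the eigenratio $\gamma_c=\lambda_{c,\max}/\lambda_{c,\min}$ that appears in~\eqref{eq:N2_secondeq} and~\eqref{eq:N3_thirdeq}. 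This bookkeeping of signs and worst-case index alignment is the delicate part of the proof.
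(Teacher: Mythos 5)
Your overall strategy matches the paper's: reduce consensusability to simultaneous Schur stabilization of the pairs $(A_i,\lambda_{c,i}B)$, use $|\det(A_i+\lambda_{c,i}BK)|<1$, and split into the three regimes determined by $\max_i|\det(A_i)|$ and $\min_i|\det(A_i)|$. The gap is exactly the step you flag at the end, and it is not a bookkeeping issue --- it is the load-bearing step of the whole proof. Via the matrix determinant lemma you obtain $N-1$ inequalities $|d_i+\lambda_{c,i}\phi_i|<1$ in which $\phi_i=K\,\mathrm{adj}(A_i)B$ genuinely varies with $i$, so each inequality constrains a \emph{different} scalar. Your triangle inequalities give a lower bound on $|\phi_{i^\star}|$ and an upper bound on $|\phi_{j^\star}|$, but these can never be chained into the single inequality $\frac{\max_i|d_i|-1}{\lambda_{c,\max}}<\frac{\min_i|d_i|+1}{\lambda_{c,\min}}$ underlying \eqref{eq:N2_secondeq} unless $|\phi_{i^\star}|$ and $|\phi_{j^\star}|$ are comparable; they are unrelated numbers, and the sign of $\phi_{i^\star}$ relative to $d_{i^\star}$ says nothing about $\phi_{j^\star}$. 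As written, in the non-N1 cases your argument only yields a separate, $K$-dependent condition for each index, which is vacuous as a necessary condition on the data $(|\det(A_i)|,\gamma_c)$.

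The paper closes this gap by writing each controllable pair $(A-\lambda_{p,i}A_p,B)$ in controllable canonical form, in which $|\det(A_{cl,i})|=|a_{i,1}-\lambda_{c,i}k_1|$ with $|a_{i,1}|=|\det(A_i)|$ and with the \emph{same} first gain entry $k_1$ for every $i$. All $N-1$ constraints then sandwich the single scalar $|k_1|$, namely $\frac{|a_{i,1}|-1}{\lambda_{c,i}}<|k_1|<\frac{|a_{i,1}|+1}{\lambda_{c,i}}$, so the corresponding open intervals share a common point and therefore $\max_i\frac{|a_{i,1}|-1}{\lambda_{c,i}}<\min_i\frac{|a_{i,1}|+1}{\lambda_{c,i}}$; conditions N1--N3 then drop out of your case analysis together with $\lambda_{c,\min}\le\lambda_{c,i}\le\lambda_{c,\max}$. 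To repair your adjugate-based route you would have to show that the functionals $K\mapsto K\,\mathrm{adj}(A_i)B$ collapse (up to sign) to one common functional independent of $i$; that is precisely what the shared canonical coordinate provides, and your computation in fact exposes that this commonality is the delicate point on which the paper's argument rests.
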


\begin{proof}
The proof is a modification of the proof of Lemma 3.1 in \cite{you2011}.
Under Assumption~\ref{ass:controllability}, without loss of generality, each pair $(A-\lambda_{p,i}A_p,B)$ can be written in controllable canonical form \vspace{-0.2cm}
\begin{equation*}
	A-\lambda_{p,i}A_p = \begin{bmatrix}
		0 & 1 & 0 & \dots  \\
		\vdots & \ddots & \ddots & \ddots \\
		0 & \dots & 0 & 1 \\
		-a_{i,1} & -a_{i,2} & \dots & -a_{i,n}
	\end{bmatrix}, \enskip B = \begin{bmatrix}
		0 \\
		\vdots \\
		0 \\
		1
	\end{bmatrix},
\end{equation*}
where $|a_{i,1}| = |\det(A-\lambda_{p,i}A_p)|$.
Given a simultaneously stabilizing feedback gain $K = \left[k_1, \dots, k_N\right]$, one can see that $|\det(A_{cl,i})| = |a_{i,1}-\lambda_{c,i}k_1|$ for the closed-loop matrix $A_{cl,i}=A-\lambda_{p,i}A_p+\lambda_{c,i}BK$.
Since $K$ is selected to stabilize $(A-\lambda_{p,i}A_p,\lambda_{c,i}B)$ for all $i \in \{2,\dots,N\}$, it holds that $\rho(A_{cl,i})<1$. Therefore, it holds that $|\det(A_{cl,i})|=\prod_{j}\lambda_j(A_{cl,i})<1$,
yielding $|a_{i,1}-\lambda_{c,i}k_1|<1$, which can be further manipulated to give \vspace{-0.15cm}
\begin{equation*}
	\frac{|a_{i,1}|-1}{\lambda_{c,i}}<|k_1|< \frac{|a_{i,1}|+1}{\lambda_{c,i}}.
\end{equation*}\vspace{-0.2cm}

\noindent This implies that $\bigcap_{i=2}^N\left(\frac{|a_{i,1}|-1}{\lambda_{c,i}},\frac{|a_{i,1}|+1}{\lambda_{c,i}}\right)\neq \emptyset$; therefore, 
\begin{equation}\label{eq:a_i_bounds}
\max_i\frac{|a_{i,1}|-1}{\lambda_{c,i}} < \min_i\frac{|a_{i,1}|+1}{\lambda_{c,i}}.
\end{equation}
Below, we will show that the above inequality implies that at least one of the conditions N1-N3 holds.
We start by noting that, the left-hand side of \eqref{eq:a_i_bounds} can either be negative or nonnegative. In the former case, \eqref{eq:a_i_bounds} is always satisfied and it holds that $\max_i|a_{i,1}|<1$, 
yielding the condition N1. On the contrary, when the left-hand side of \eqref{eq:a_i_bounds} is nonnegative,  $$\max_i\frac{|a_{i,1}|-1}{\lambda_{c,i}}\geq\max_i\frac{|a_{i,1}|-1}{\max_j\lambda_{c,j}}=\frac{\max_i|a_{i,1}|-1}{\lambda_{c,\max}}$$ and \vspace{-0.2cm} $$\min_i\frac{|a_{i,1}|+1}{\lambda_{c,i}}\leq\min_i\frac{|a_{i,1}|+1}{\min_j\lambda_{c,j}}=\frac{\min_i|a_{i,1}|+1}{\lambda_{c,\min}}$$hold. Therefore, combining these inequalities with \eqref{eq:a_i_bounds}, one gets \eqref{eq:N2_secondeq}.
In addition, we note that 
\begin{equation*}
	\max_i\frac{|a_{i,1}|-1}{\lambda_{c,i}} \geq \max_i\frac{\min_j|a_{j,1}|-1}{\lambda_{c,i}},
\end{equation*}
which means that \eqref{eq:a_i_bounds} implies 
\begin{equation}\label{eq:a_i_bounds_impl}
\max_i\frac{\min_j|a_{j,1}|-1}{\lambda_{c,i}} < \min_i\frac{|a_{i,1}|+1}{\lambda_{c,i}}.
\end{equation}
We again make the distinction of two cases where $\min_j|a_{j,1}|-1$ is negative or nonnegative. When the former holds, \eqref{eq:a_i_bounds_impl} is always satisfied. Hence, combining $\min_j|a_{j,1}|<1$ with $\max_i|a_{i,1}|\geq1$ and \eqref{eq:N2_secondeq}, one forms the condition N2. Otherwise, when it is nonnegative, one can show that
\begin{equation*}
	\max_i\frac{\min_j|a_{j,1}|-1}{\lambda_{c,i}} = \frac{\min_j|a_{j,1}|-1}{\lambda_{c,\min}}.
\end{equation*}
We can also derive an upper bound to the term on the right-hand side of \eqref{eq:a_i_bounds_impl} as
$$\min_i\frac{|a_{i,1}|+1}{\lambda_{c,i}}\leq\min_i\frac{\max_j|a_{j,1}|+1}{\lambda_{c,i}}=\frac{\max_j|a_{j,1}|+1}{\lambda_{c,\max}}.$$ Incorporating the last two equations with \eqref{eq:a_i_bounds_impl}, we get \eqref{eq:N3_thirdeq}.
Finally, condition N3 is obtained by combining \eqref{eq:N2_secondeq} and \eqref{eq:N3_thirdeq} with $\min_j|a_{j,1}|\geq1$.
\end{proof}

Note that the conditions N1-N3 involve only the three quantities, $\max_i|\det(A_i)|$, $\min_i|\det(A_i)|$, and $\gamma_c$, making them easy to verify. These conditions also provide an understanding of how these variables effect consensusability, as discussed in the next section.  Finally, observe that N3 reduces to the necessary and sufficient condition for the consensusability of MASs provided in~\cite{you2011} when physical couplings are absent, i.e., when $A_p=\0_{n\times n}$. 

We also highlight that, in addition to providing sufficient conditions for consensusability, Theorems~\ref{prop:consensusability_scalar} and \ref{prop:consensusability_sufficient} as well as Corollary~\ref{crl:consensusability_numerical} also show how to design the controller gain $K$ for reaching consensus. \vspace{-0.6cm}

\subsection{Discussion of the consensusability results}\label{subsec:DiscussionofResults}

Hereafter, we discuss implications of the results given in the previous section. Specific comments provided below point out that consensusability is easier to achieve in LIMASs with
\begin{enumerate}
	\item \textit{weak} physical coupling and
	\item \textit{densely-connected} physical and communication graphs.
\end{enumerate}

Specifically, the conditions S1 and S2 in Theorem~\ref{prop:consensusability_scalar} 
can be satisfied only if $\Delta_p < 2$. This is possible if $\lambda_{p,\max}$ is close to $\lambda_{p,\min}$, or if the physical coupling is weak. We also note that, by assuming $\Delta_p<2$ and that the first inequality of S1 is satisfied, $\gamma_c$ values satisfying the second inequality can always be found in a neighborhood of $\gamma_c=1$. Moreover, this also holds for condition S2, showing that communication graphs whose Laplacian have eigenvalues that are close to each other favor consensusability. These conditions imply that both graphs $\Gcal_p$ and $\Gcal_c$ are \textit{densely connected}.
Therefore, this feature favors consensusability.

The LP in \eqref{eq:LP_suff_prop_red} is \textit{easier} to solve when the pairs $(A_i,B_i)$ are \textit{closer} to each other, as discussed in Remark~\ref{rem:small_difference_between_subsystems}. With the definition $A_i=A-\lambda_{p,i}A_p$, it is straightforward to see that these matrices are \textit{close} to each other when $\|\lambda_{p,i}A_p\|$ is \textit{small}, or $\lambda_{p,i}$ are similar. Analogously, $B_i=\lambda_{c,i}B$ are \textit{close} to each other when $\lambda_{c,i}$ are similar. These observations point out once more that the LP in \eqref{eq:LP_suff_prop_red} is more likely to be feasible when the physical interconnection is weak and both physical and communication graphs are densely connected.

Considering Theorem~\ref{prop:consensusability_sufficient}, the condition~\eqref{eq:consensusability_sufficient_critical} can be satisfied only if 
$\alpha_{\min}^2>\alpha_{\max}^2\sigma_c$. Taking into account that $\sigma_c$ increases with $\alpha_{\max}$, the inequality is fulfilled only when $\alpha_{\max}$ is sufficiently small and close to $\alpha_{\min}$. Note that $\alpha_{\max}$ takes small values for small values of $\alpha$, i.e., when the effect of physical coupling is weak. Moreover, $\alpha_{\min}$ and $\alpha_{\max}$ are close to each other when $\lambda_{p,\min}$ and $\lambda_{p,\max}$ are close. Also note that the left-hand side of the inequality~\eqref{eq:consensusability_sufficient_critical} decreases as the ratio $\frac{\lambda_{c,\min}}{\lambda_{c,\max}}$ increases, i.e., as the eigenvalues of the communication graph get closer to each other. Therefore, the implications of the Theorem~\ref{prop:consensusability_sufficient} match with the observations made for Theorem~\ref{prop:consensusability_scalar}.

 In order to show the role of physical coupling in Theorem~\ref{prop:consensusability_necessary}, we look at the extreme case $A=\0_{n\times n}$, for which condition N1 is not satisfied for \textit{strong} physical coupling characterized by large values of $|\det(A_p)|$ and $\lambda_{p,i}$. Similarly, the inequality \eqref{eq:N2_secondeq} is more difficult to satisfy for \textit{stronger} physical coupling for fixed $\gamma_c$, since $\max_i|\det(A_i)|$ will be much larger than $\min_i|\det(A_i)|$. One can also see that, for fixed $A_i$, the inequality \eqref{eq:N3_thirdeq} gets more difficult to satisfy as $\gamma_c$ grows, which corresponds to a decrease in the connectivity of the communication graph~\cite{barahona2002synchronization}. This, in turn, means that, it is more difficult to satisfy the condition N3 as the communication graph gets more \textit{sparse}.\vspace{-0.4cm}

\section{Simulation Results}\label{sec:SimulationResults}

 \def\Clone{(-4.0,1.8)}
 \def\Cltwo{(3.0,1.5)}
 \def\Clthree{(-0.85,-1.0)}
 \def\curvrad{0.2}

\begin{figure}[t]
	\centering
	\begin{subfigure}{0.48\textwidth}
		\centering
		\ctikzset{bipoles/length=0.8cm}
		\tikzstyle{every node}=[minimum size=.8cm, inner sep=.8, line width=0.5pt, scale=0.7]
		\begin{circuitikz}[american currents, scale=0.5, line width=1pt]
			\draw \Clone+(0,0) node(Cluster1) [circle, dashed, minimum size=3cm, fill=red!10,draw=black] {};
			\draw \Clone+(0,2.5) node(Title1) {Cluster $1$};
			\draw \Cltwo+(0,0) node(Cluster2) [circle, dashed, minimum size=3.5cm, fill=red!10,draw=black] {};
			\draw \Cltwo+(0,2.8) node(Title2) {Cluster $2$};
			\draw \Clthree+(0,0) node(Cluster3) [circle, dashed, minimum size=2.75cm, fill=red!10,draw=black] {};
			\draw \Clthree+(0,+2.3) node(Title3) {Cluster $3$};
			
			\draw \Clone+(-0.5,0.866) node(s1) [circle, draw=black, double,fill=black!20] {$\Scal_1$};
			\draw \Clone+(1,0) node(s2) [circle, draw=black, double,fill=black!20] {$\Scal_2$};
			\draw \Clone+(-0.5,-0.866) node(s3) [circle, draw=black, double,fill=black!20] {$\Scal_3$};
			
			\draw[blue] (s1) to(s3);
			\draw[blue] (s2) to(s3);
			
			\draw[red,bend left=30] (s1.north east) to (s2.north);
			\draw[red,bend left=30] (s3.west) to (s1.west);
			
			\draw \Cltwo+(-0.95,0.95) node(s4) [circle, draw=black, double,fill=black!20] {$\Scal_4$};
			\draw \Cltwo+(0.95,0.95) node(s5) [circle, draw=black, double,fill=black!20] {$\Scal_5$};
			\draw \Cltwo+(-0.95,-0.95) node(s6) [circle, draw=black, double,fill=black!20] {$\Scal_6$};
			\draw \Cltwo+(0.95,-0.95) node(s7) [circle, draw=black, double,fill=black!20] {$\Scal_7$};
			
			\draw[blue] (s4) to(s7);
			\draw[blue] (s5) to(s6);
			\draw[blue] (s6) to(s7);
			\draw[red,bend left=30] (s2.north east) to (s4.north west);
			
			\draw[red,bend right=30] (s4.south west) to (s6.north west);
			\draw[red,bend right=30] (s6.south east) to (s7.south west);
			\draw[red,bend right=30] (s7.north east) to (s5.south east);
			\draw[red,bend right=30] (s5.north west) to (s4.north east);
			
			\draw \Clthree+(-0.9,0) node(s8) [circle, draw=black, double,fill=black!20] {$\Scal_8$};
			\draw \Clthree+(0.9,0) node(s9) [circle, draw=black, double,fill=black!20] {$\Scal_9$};
			
			\draw[blue] (s8) to(s9);
			
			\draw[red,bend right=30] (s3.south) to (s8.west);
			\draw[red,bend right=30] (s9.east) to (s6.south);
			
			\draw[red,bend left=30] (s8.north east) to (s9.north west);
		\end{circuitikz}
		\caption{Schematic of the network of identical supercapacitors arranged in multiple clusters. Lines in blue and red represent, respectively, the physical and communication interconnections.
		}
		\label{fig:supercapacitornetwork}
	\end{subfigure}

	\begin{subfigure}{0.48\textwidth}
		\centering
		\small
		\tikzstyle{every node}=[scale=0.7]
		\ctikzset{bipoles/length=0.8cm}
		\begin{circuitikz}[american currents, scale=0.5, line width=0.7pt]
			\draw (-3,0.9) node(s8) [circle, double, minimum size=3.4cm, fill=gray!10,draw=black!30] {};
			\draw (-3.0,3.5) node(s8Title) {$\Scal_8$};
			\draw (3,0.9) node(s9) [circle, double, minimum size=3.4cm, fill=gray!10,draw=black!30] {};
			\draw (3,3.5) node(s8Title) {$\Scal_9$};			
			\draw (-2.75,0) [short] to (-4.4,0)
			to [C=$C$, o-o](-4.4,2)
			to [short] (-2.75,2)
			to [R=$R$] (-2.75,0)
			to [short] (-1.15,0)
			to [dcisource, l=$I_8$] (-1.15,2);
			\draw[-latex] (-3.75,0.25) -- (-3.75,1.75)node[midway,right]{$V_{8}$};
			\draw (-2.75,2) [short] to (-1.15,2);
			
			\draw (2.75,0) [short] to (4.4,0)
			to [C, a=$C$, o-o](4.4,2)
			to [short] (2.75,2)
			to [R, a=$R$] (2.75,0)
			to [short] (1.15,0)
			to [dcisource, a=$I_9$] (1.15,2);
			\draw[-latex] (3.75,0.25) -- (3.75,1.75)node[midway,left]{$V_{9}$};
			\draw (2.75,2) [short] to (1.15,2);
			
			\draw (-1.15,2) [R=\textcolor{blue}{$R_{89}$},color=blue, draw=blue]  to (1.15,2);
		\end{circuitikz}
		\caption{Electrical scheme of supercapacitors $\Scal_8$ and $\Scal_9$ in Cluster~$3$.
		}
		\label{fig:supercapacitordynamics}
	\end{subfigure}
	\caption{Network of identical supercapacitors considered for simulations in Section~\ref{subsec:SimulationResults_Scalar}.}\vspace{-0.6cm}
	\label{fig:supercapacitor}
\end{figure}
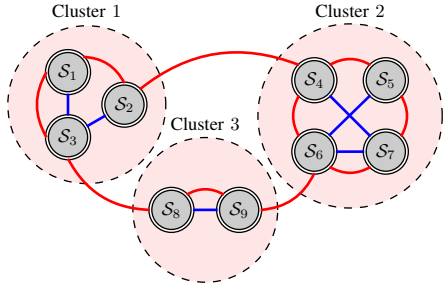
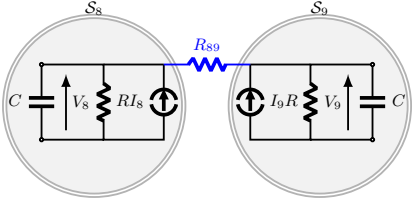

In this section, we present two different sets of simulations to validate the results in Sections~\ref{subsec:ScalarDynamics} and \ref{subsec:VectorDynamics}, respectively.\vspace{-0.3cm}

\subsection{Consensusability of a network of supercapacitors}\label{subsec:SimulationResults_Scalar}

In recent years, supercapacitors have been popularized as an alternative to batteries in application domains such as microgrids (mGs), transportation systems, and automotive~\cite{inthamoussou2013control,freeborn2015fractional}. In some of these applications, a network of multiple supercapacitors can be utilized, making it a LIMAS. In the sequel, we consider a network of identical supercapacitors, which are arranged in clusters, as shown in Figure~\ref{fig:supercapacitornetwork}. Such a system represents the scenario where groups of supercapacitors are far away from each other and no physical interconnection between them is possible.

As shown in Figure~\ref{fig:supercapacitordynamics}, we model each subsystem $i$ as a parallel RC circuit with $C=10F$, $R=5k\Omega$, and a current source supplying a time-varying current $I_i$. The resistance models the power \textit{leak} from the capacitor. Subsystems are coupled through resistive electrical lines (see Figure~\ref{fig:supercapacitordynamics}), whose resistance values $R_{ij}$ are chosen randomly from the interval $[10,50]\Omega$. We are interested in the problem of controlling the charging currents $I_i$ such that the voltages across each supercapacitor reach consensus. In practical applications, this might be needed to ensure the same voltage level across all storage devices, as they might feed the same load during a \textit{discharge} period (not considered here). We assume that, the input currents $I_i$ are computed using the controller~\eqref{eq:Si_contr}, utilizing the communication network represented in Figure~\ref{fig:supercapacitornetwork} with unit edge weights. Applying Kirchoff's current and voltage laws, the voltage dynamics of the supercapacitor $i$ is given by
\small
\begin{equation}\label{eq:supercapacitor_dyn}
        C\dot{V}_i = -\frac{1}{R}V_i - \sum_{j\in\Ncal_i^p}\frac{1}{R_{ij}}\left(V_i-V_j\right) +k\sum_{j\in\Ncal_i^c}\left(V_i-V_j\right).
\end{equation}
\normalsize
In order to match the model~\eqref{eq:LIMAS_dyn}, we discretize~\eqref{eq:supercapacitor_dyn} in time by using the forward Euler method with a sampling period of $T_s = 0.1ms$. We obtain the dynamics\vspace{-0.2cm} $$x^+ = (a\I_N-\Lcal_p^m+k\Lcal_c^m)x,$$\vspace{-0.5cm} 

\noindent where 
 $x=[V_1,\dots,V_9]^\top$, $a=1-T_s\frac{1}{RC}\approx 1$, $\Lcal_p^m\triangleq\frac{T_s}{C}\Lcal_p$, and $\Lcal_c^m\triangleq\frac{T_s}{C}\Lcal_c$. Moreover, $\Lcal_p$ and $\Lcal_c$ are the Laplacian matrices of the physical and communication graphs, respectively. The \textit{modified} Laplacian matrices $\Lcal_p^m$ and $\Lcal_c^m$ account for the effect of discretization, and are used in the analysis of consensusability.

\begin{figure}[t]
        \centering
        \includegraphics[width=0.49\textwidth]{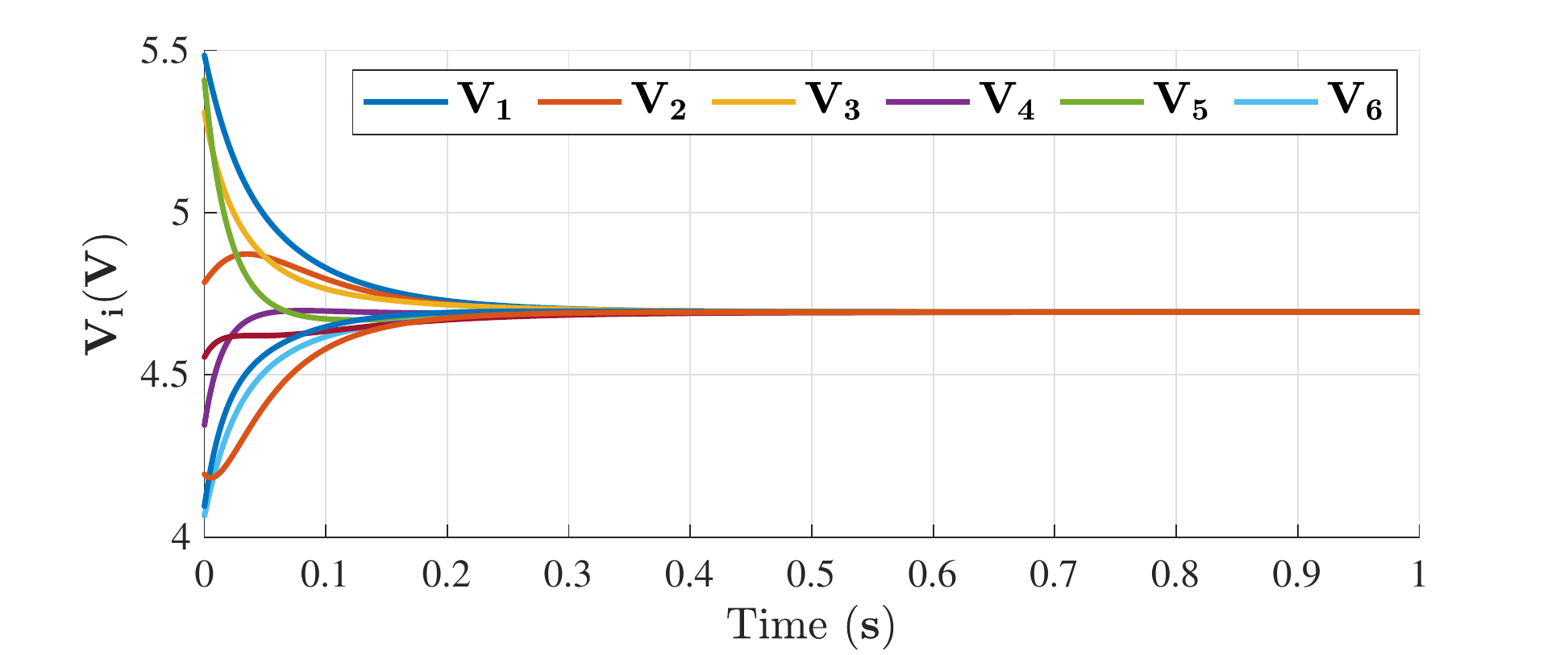}	\vspace{-.6cm}
        \caption{Voltages of the supercapacitors initialized with random initial voltages.} \vspace{-0.6cm}
        \label{fig:supercapacitor_voltages}
 \end{figure} 

 On applying Theorem~\ref{prop:consensusability_scalar} to this system, we see that both conditions S1 and S2 are satisfied; therefore, any a control gain $k$ in the intervals $\mathbb{K}^+\cap\RR_{\geq 0} = [0,4.071\times 10^{-5})$ and $\mathbb{K}^-\cap\RR_{< 0}=(-4.071\times 10^{4},0)$ guarantees the achievement of consensus. We choose $k=-200$ and initialize each supercapacitor with a random voltage value between $4V$ and $6V$ when running the continuous-time simulations of the LIMAS. As shown in Figure~\ref{fig:supercapacitor_voltages}, all voltage levels converge to a common value, as guaranteed by Theorem~\ref{prop:consensusability_scalar}. In order to see how tight the computed interval $\mathbb{K}=(-4.071\times 10^{4},4.071\times 10^{-5})$ of control gains is, we create a fine grid of $k$ values strictly including $\mathbb{K}$ to check the values for which the matrix in~\eqref{eq:deltatilde_dyn_scalar} is Schur stable. We find that consensus is reached for $k$ values in the interval $(-4.071\times 10^{4},1.532\times 10^{-4})$, which is very close to $\mathbb{K}$. This is expected since $\Lcal_p^m$ is close to a zero matrix due to the small sampling time $T_s$~(see Remark~\ref{rem:scalar_conservativity}).
 
 Finally, because this LIMAS is consensusable and there exists a control gain $k$ reaching consensus, Theorem~\ref{prop:consensusability_necessary} asserts that at least one of the conditions N1-N3 should hold. Upon setting $A_i = a-\lambda_i(\Lcal_p^m)$, one sees that N1 is satisfied. \vspace{-0.5cm}

\subsection{Consensusability of a DC microgrid}\label{subsec:SimulationResults_Vector}

DC microgrids (DCmGs) are electrical networks of distributed generation units (DGUs) incorporating renewable energy sources and/or storage devices. They have recently risen in popularity thanks to the advantages they offer over traditional power grid in certain application scenarios~\cite{guerrero2010}. A remarkable feature of DCmGs is their ability to run in islanded mode, i.e., disconnected from the main grid, which brings about the additional challenge of ensuring their safe and reliable operation~\cite{dragivcevic2015dcII,gallo2020distributed}. To this purpose, hiearchical control structures are commonly used, where primary controllers regulate the voltages of DGUs and higher-level controllers enable coordination among DGUs~\cite{nahata2020passivity,tucci2018}. An example is provided by secondary controllers for reaching consensus on DGU states which, in turn, results in current sharing and voltage balancing~\cite{tucci2018}.

\begin{figure}[t]
	\centering
	\begin{subfigure}{0.48\textwidth}
		\centering
		\ctikzset{bipoles/length=0.8cm}
		\tikzstyle{every node}=[minimum size=.8cm, inner sep=.8, line width=0.5pt]
		\begin{circuitikz}[american currents, scale=0.73, line width=1pt]

			\draw (0,0) node(s1) [circle, draw=black, double,fill=black!20] {\tiny $DGU_1$};
			\draw (s1)+(2,0) node(s2) [circle, draw=black, double,fill=black!20] {\tiny $DGU_2$};
			\draw (s2)+(-1,-1.4) node(s3) [circle, draw=black, double,fill=black!20] {\tiny $DGU_3$};
			\draw (s2)+(2,0) node(s4) [circle, draw=black, double,fill=black!20] {\tiny $DGU_4$};
			\draw (s4)+(1,-1.4) node(s5) [circle, draw=black, double,fill=black!20] {\tiny $DGU_5$};
			\draw (s4)+(2,0) node(s6) [circle, draw=black, double,fill=black!20] {\tiny $DGU_6$};
			\draw (s6)+(2,0) node(s7) [circle, draw=black, double,fill=black!20] {\tiny $DGU_7$};
			\draw (s6)+(1,-1.4) node(s8) [circle, draw=black, double,fill=black!20] {\tiny $DGU_8$};
			\draw (s8)+(2,0) node(s9) [circle, draw=black, double,fill=black!20] {\tiny $DGU_9$};
			
			\draw[blue] (s1) to(s2);
			\draw[blue] (s2) to(s3);
			\draw[blue] (s2) to(s4);
			\draw[blue] (s4) to(s5);
			\draw[blue] (s4) to(s6);
			\draw[blue] (s6) to(s7);
			\draw[blue] (s6) to(s8);
			\draw[blue] (s8) to(s9);
		\end{circuitikz}
		\caption{Schematic of the DCmG. Blue lines represent the physical interconnection between DGUs, realized as resistive power lines.
		}
		\label{fig:microgridnetwork}
	\end{subfigure}\vspace{.3cm}
	\begin{subfigure}{0.48\textwidth}
		\centering
		\ctikzset{bipoles/length=0.65cm}
		\tikzstyle{every node}=[font=\tiny]
		\begin{tikzpicture}[scale=0.5]
		\draw (1.5,4)
		to [short](1.5,4.5)
		to [short](3.5,4.5)
		to [short](3.5,0.5)
		to [short](1.5,0.5)
		to [short](1.5,4)
		to [short](1,4)
		to [battery, o-o](1,1)
		to [short](1.5,1)
		to [short](1,1);
		\node at (2.5,2.5){\footnotesize \textbf{Buck $i$}};
		\draw[-latex] (4,1.25) -- (4,3.75)node[midway,right]{$V_{ti}$};
		\draw (3.5,4) to [short](4,4)
		to [short](4.5,4)
		to [R=$R_{t}$] (6,4)
		to [L=$L_{t}$] (7.5,4)
		to [short, i=$\textcolor{green}{I _{ti}}$, -] (8.5,4)
		to [short](9,4)
		to [C, a=$C_{t}$, -] (9,1)
		to [short](4,1)
		to [short](3.5,1);
		\draw[blue] (12.3,4)  to [R=\textcolor{blue}{$R_{ij}$},color=blue,draw=blue] (15.8,4) 
		to [short,color=blue, -o] (16,4) node[anchor=north,above]{$V_j$};
		\draw (8.5,4) to (11,4)
		to [R ,a=$R_L$] (11 ,1)
		to [short] (9,1);
		\draw[blue] (11,1)
		to [short,color=blue, -o] (16,1);
		\draw (11,4) to [short](11.5,4);
		\draw (11,4) node[anchor=north, above]{$\textcolor{red}{V_i}$}  to [short](11,2.9);
		\draw[blue] (11,4) to [short](12.5,4);
		\node at (11,4.6)[anchor=north, above]{$PCC_i$} ;
		\draw[black, dashed] (.5,.25) -- (12.45,.25) -- (12.45,5.5) -- (.5,5.5)node[sloped, midway, above]{{\footnotesize \textbf{DGU and Load $i$ }}}  -- (.5,.25);
		\draw[black, dashed] (12.7,.25) -- (15.5,.25) -- (15.5,5.5) -- (12.7,5.5)node[sloped, midway, above]{{\footnotesize \textbf{Power line $ij$}}}  -- (12.7,.25);
		\draw[red,o-] (10.9,4.15) -- (11.7,3) to (11.7,-0.8);
		\draw[red,latex-](8,-0.8)-- (9,-0.8) --  (10,0)-- (11.7,0);
		\draw[green,o-latex] (8.5,4.15) to (8.5,1.75) -- (8.5,-0.3) to (8,-0.3);
		\draw (10,-1.3) node(a) [black, draw,fill=white!20] {$\normalsize{\int}$};
		\draw[-latex] (a.west) to (8,-1.3);
		\draw (11.7,-1.3) node(b)[ circle, draw=black, minimum size=12pt, fill=lightgray!20]{};
		\draw[red, -latex] (11.7,1.75)  -- (b.north) node[pos=0.9, left]{\textcolor{black}{\normalsize{-}}};
		\draw[latex-] (a.east) -- (b.west);
		\draw[-latex] (12.8,-1.3)  -- (b.east) node[pos=0.7,above]{{+}} node[pos=0.25,right]{$V_{ref,i}$};
		\draw[fill=lightgray] (8,-1.55) -- (8,-0.05) -- (6.5,-0.8) -- (8,-1.55);
		\node at (7.5,-0.8) {$K_{pr}$};
		\draw[-latex] (6.5,-0.8) -- (5.5,-0.8) -- (5.5,1.5) -- (5.5,2.5) -- (5,2.5);
		\end{tikzpicture}
		\caption{Electrical scheme of $i^{th}$ DGU along with load, connecting line(s), and local primary voltage regulator. }
		\label{fig:DGUi}
	\end{subfigure}
	\caption{DCmG considered in Section~\ref{subsec:SimulationResults_Vector}.}
	\label{fig:microgrid}\vspace{-0.6cm}
\end{figure}
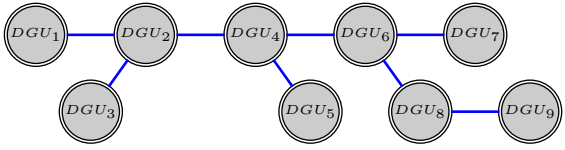
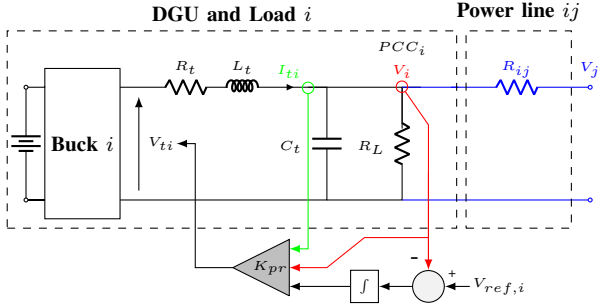

In this section, we consider a DCmG consisting of identical DGUs that are interconnected as shown in Figure~\ref{fig:microgridnetwork} and study the consensusability problem. Each DGU is modeled as a Buck converter with an $RLC$ filter, connecting a voltage source to a resistive load and neighboring DGUs, as shown in Figure~\ref{fig:DGUi}. Each DGU also includes a primary voltage controller for steering the voltage value $V_i$ at the point of common coupling (PCC) to the reference value $V_{ref,i}$. As depicted in Figure~\ref{fig:DGUi}, these primary controllers, developed in~\cite{nahata2020passivity}, have a static state-feedback structure captured by the gain $K_{pr}=[k_{pr,1},k_{pr,2},k_{pr,3}]$ and integral action. On applying Kirchoff's voltage and current laws, the dynamics of the DGU $i$ is written as\vspace{-0.1cm}
\begin{equation}\label{eq:DGUdynamics}
\begin{aligned}
C_{t}\dot{V}_{i} &= -\frac{V_i}{R_L}+I_{ti} -\sum_{j\in\Ncal_i^p}\frac{1}{R_{ij}}(V_i-V_j)\\
L_{t}\dot{I}_{ti} &= (k_{pr,1}-1)V_{i}+(k_{pr,2}-{R_{t}})I_{ti}+k_{pr,3}v_i\\
\dot{v}_i &= -V_i+V_{ref,i}
\end{aligned},
\end{equation}
where $I_{ti}$ is the filter current passing through the inductance and $v_i$ is the integrator state of the primary controller. In this section, we seek to develop a secondary controller modifying the voltage references $V_{ref,i}$ of each DGU to achieve consensus on the states of all DGUs in the DCmG. Specifically, we consider a networked controller given by \vspace{-0.1cm}
\begin{equation}\label{eq:Vrefi}
V_{ref,i} = V_{ref}+K\sum_{j\in\Ncal_i^c}b_{ij}(x_i-x_j),
\end{equation}\vspace{-0.4cm}

\noindent where $x_i \triangleq [V_i,I_{ti},v_i]^\top$ is the state of DGU $i$ and $V_{ref}=48V$ is a common nominal voltage reference for all DGUs.
We note that DGUs in the DCmG are physically coupled to each other and the secondary controller~\eqref{eq:Vrefi} is based on a communication network (inducing the set of neighbors $\Ncal_i^c$). Therefore, after discretizing \eqref{eq:DGUdynamics} with forward Euler method with a sampling period of $T_s=0.1ms$, the overall dynamics of the DCmG can be written as in \eqref{eq:LIMAS_dyn}, where the system matrices are defined as $ A = \I_3-T_sA_{ct}$, $B = \begin{bmatrix}
0 & 0 & T_s
\end{bmatrix}^\top,$\vspace{-0.2cm}
\begin{equation*}
	\begin{aligned}
		A_{ct} = \begin{bmatrix}
			-\frac{1}{R_LC_t} & \frac{1}{Ct}& 0 \\
			\frac{k_{pr,1}-1}{L_t} & \frac{k_{pr,2}-R_t}{L_t} & \frac{k_{pr,3}}{L_t} \\
			-1 & 0 & 0
		\end{bmatrix},
		\enskip &A_p = \begin{bmatrix}
			\frac{T_s}{C_t} & 0 & 0 \\
			0 & 0 & 0 \\
			0 & 0 & 0
		\end{bmatrix}.
	\end{aligned}
\end{equation*}

	\begin{figure}[t]
	\centering
	\includegraphics[width=0.45\textwidth]{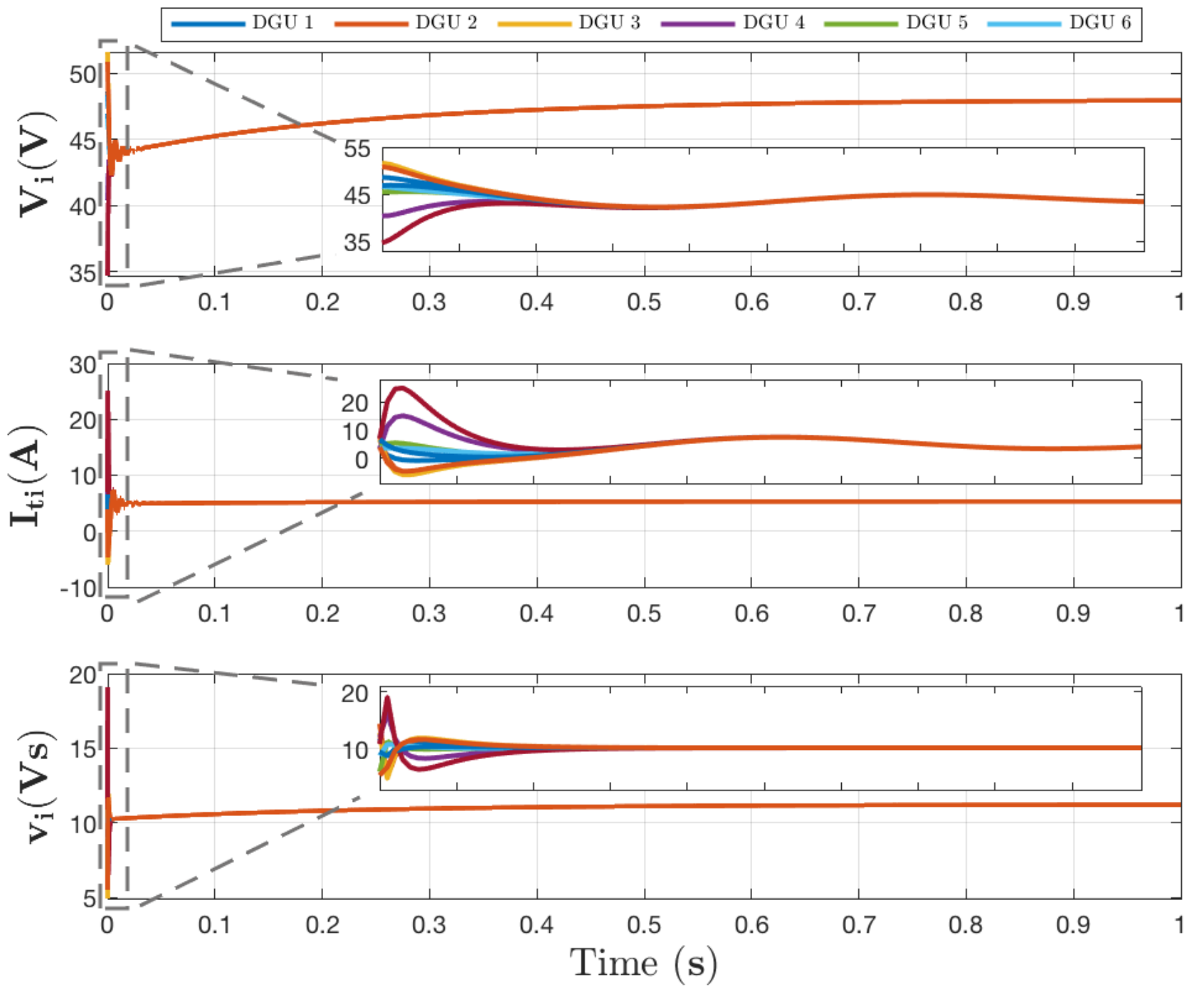}	\vspace{-.2cm}
	\caption{States of the DGUs in the DCmG equipped with consensus controllers. The zoomed-in figures show that consensus is achieved very quickly.}\vspace{-0.6cm}
	\label{fig:microgrid_states}
\end{figure}

We use the parameter values $R_t = 0.2$~$\Omega$, $C_t = 2.2$~$mF$, $L_t = 1.8$~$mH$, $R_L=9$~$\Omega$, and $K_{pr} = [-2.13,-0.16,13.55]$ taken from~\cite{nahata2020passivity}. Moreover, $\Lcal_p$ is derived from the physical interconnection topology in Figure~\ref{fig:microgridnetwork} and edge weights $1/R_{ij},$ $\forall i \in \{1,\dots,9\}$, $j\in\Ncal_i^p$, where power line resistances $R_{ij}$ are selected randomly in the interval $[4,8]\Omega$. In order to verify Assumption~\ref{ass:lapl_commute}, we assume a complete communication graph with uniform edge weights. Therefore, $\Lcal_c = \I_9-\frac{1}{9}\1_9\1_9^\top$. With these definitions, it is easy to verify that Assumption~\ref{ass:controllability} also holds. Hence, Corollary~\ref{crl:consensusability_numerical} can be utilized.

We construct the necessary matrices as shown in \eqref{eq:LP_suff_prop}-\eqref{eq:V_H_definitions}, \eqref{eq:Vdagger}-\eqref{eq:Psi} and verify that the LP in \eqref{eq:LP_suff_prop_red} is feasible. Hence, a consensus-enabling controller gain $K$ can be designed. With this controller in place, we run a simulation from random initial conditions of DGUs, based on their continuous-time dynamics in \eqref{eq:DGUdynamics}. Figure~\ref{fig:microgrid_states} shows that consensus is quickly reached as Corollary~\ref{crl:consensusability_numerical} certifies, and the voltages are regulated, albeit relatively slowly, towards the nominal reference value of $48V$. Furthermore, condition N1 is satisfied, as Theorem~\ref{prop:consensusability_necessary} guarantees for this consensusable LIMAS.

In order to show the effect of physical coupling on consensusability, we gradually increase $\Delta_p$ by scaling down the line resistances, i.e., we use $\tilde{R}_{ij}\triangleq \xi R_{ij}$ in the definition of $\Lcal_p$ for $\xi\in(0,1)$. For $\xi=0.072$, the LP in~\eqref{eq:LP_suff_prop_red} becomes infeasible. Similarly, if a DCmG is not consensusable using the proposed results, weakening its physical coupling could make it consensusable.

We next investigate whether our consensusability test in Corollary~\ref{crl:consensusability_numerical} can be used even when Assumption~\ref{ass:lapl_commute} is not satisfied. We do this by changing the topology of $\Gcal_c$. We see that the LP in~\eqref{eq:LP_suff_prop_red} is infeasible for $\mathcal{G}_c$ with circle and star topologies and unit edge weights. Moreover, although~\eqref{eq:LP_suff_prop_red} is feasible for a complete $\Gcal_c$, it can become infeasible upon removal of $2$ edges. By keeping the complete topology of $\Gcal_c$ and selecting non-uniform edge weights between $0$ and $1$, we obtain that~\eqref{eq:LP_suff_prop_red} becomes infeasible as well. This study reveals that our results depend critically on the satisfaction of Assumption~\ref{ass:lapl_commute}.

\vspace{-0.2cm}

\section{Conclusions and Future Perspectives}\label{sec:ConclusionAndFuturePerspectives}

In this paper, we considered linear MASs with physical interconnections among subsystems (LIMASs) and studied their consensusability properties.
We show that checking the consensusability of LIMASs is equivalent to a simultaneous stabilization problem and present a linear program (LP) based method for verifying simultaneous stabilization.
Based on this result, we present a numerical sufficient condition for the consensusability of a LIMAS.
Moreover, we propose several algebraic consensusability conditions that are either sufficient or necessary.
The derived results show that \textit{weak} physical coupling and \textit{densely-connected} physical and communication graphs are favorable for consensusability.

The presented results in the case of vector dynamics rely on the assumption that the Laplacians of physical and communication graphs commute.
Eliminating this assumption will be considered in future works.
Another direction for follow-up research is to study output synchronization for LIMASs.\vspace{-0.3cm}

\bibliographystyle{IEEEtran}
\bibliography{consensus_interconnected_MAS_TR}
	
\end{document}